\newtheorem{theorem}{Theorem}[section]
\newtheorem*{remark*}{Remark}
\newtheorem*{notation*}{Notation}
\newtheorem*{observation*}{Observation}
\newtheorem*{theorem*}{Theorem}
\tikzstyle{n}=[]
\tikzstyle{u}=[circle, draw, fill=black, inner sep=0pt, minimum width=3pt]
\tikzstyle{sq}=[square, draw, fill=blue, inner sep=0pt, minimum width=3pt]
\title{The Declining Price Anomaly is not Universal\\[.2cm] in Multi-Buyer Sequential Auctions (but almost is)}
\author{V.V. Narayan\thanks{McGill University: {\tt vishnu.narayan@mail.mcgill.ca}}
\and E. Prebet\thanks{\'{E}cole polytechnique f\'{e}d\'{e}rale de Lausanne: {\tt enguerrand.prebet@epfl.ch}} 
\and A. Vetta\thanks{McGill University: {\tt adrian.vetta@mcgill.ca}}}
\begin{document}
\maketitle

\begin{abstract}
The declining price anomaly states that the price weakly decreases
when multiple copies of an item are sold sequentially over time.
The anomaly has been observed in a plethora of practical applications.
On the theoretical side, Gale and Stegeman~\cite{GS01} proved that the anomaly is guaranteed 
to hold in full information sequential auctions with exactly two buyers. 
We prove that the declining price anomaly is {\em not} guaranteed 
in full information sequential auctions with three or more buyers. 
This result applies to both first-price and second-price sequential auctions. Moreover, it applies regardless of 
the tie-breaking rule used to generate equilibria in these sequential auctions.
To prove this result we provide a refined treatment of subgame perfect equilibria that survive the iterative deletion of weakly dominated strategies
and use this framework to experimentally generate a very large number of random sequential auction instances.
In particular, our experiments produce an instance with three bidders and eight items that, for a specific tie-breaking rule, 
induces a non-monotonic price trajectory.  Theoretic analyses are then applied to show that this instance can be used to prove that
for every possible tie-breaking rule there is a sequential auction on which it induces a non-monotonic price trajectory.
On the other hand, our experiments show that non-monotonic price trajectories are extremely rare.
In over six million experiments only a $0.000183$ proportion of the instances violated the declining price anomaly.
\end{abstract}

\section{Introduction}\label{sec:intro}
In a sequential auction identical copies of an item are sold over time.
In a private values model with {\em unit-demand}, risk neutral buyers, 
Milgrom and Weber~\citep{MW00,Web83} showed that the sequence of prices
forms a martingale. In particular, expected prices are constant over time.\footnote{If the
values are affiliated then prices can have an upwards drift.}
In contrast, on attending a wine auction, 
\citet{Ash89} made the surprising observation that prices for identical lots
declined over time: ``The law of the one price was repealed and no one even seemed to notice!"
This {\em declining price anomaly} was also noted in sequential auctions for the disparate examples of 
livestock~(\citet{Buc82}), Picasso prints~(\citet{PS96}) and satellite transponder leases~(\citet{MW00}).
Indeed, the possibility of decreasing prices in a sequential auction was raised by~\citet{Sos61} nearly sixty years ago.
An assortment of reasons have been given to explain this anomaly. 
In the case of wine auctions, proposed causes include absentee buyers
utilizing non-optimal bidding strategies~(\citet{Gin98}) and the {\em buyer's option rule} where the auctioneer may
allow the buyer of the first lot to make additional purchases at the same price~(\citet{BM92}). Minor non-homogeneities 
amongst the items can also lead to falling prices. For example, in the case of art prints the items may suffer
slight imperfections or wear-and-tear; as a consequence, the auctioneer may sell the prints in decreasing order of quality~(\citet{PS96}).
More generally, a decreasing price trajectory may arise due to risk-aversion,
such as non-decreasing, absolute risk-aversion~(\citet{MV93}) or aversion to price-risk~(\citet{Mez11}); see 
also \citet{HZ15}. Further potential economic and behavioural explanations have been provided in \citep{Gin98, Ashta06, Tu10}.

Of course, most of these explanations are context-specific. However, the declining price anomaly appears more universal.
In fact, in practice the anomaly is ubiquitous: It has now been observed in sequential auctions
for antiques~(\citet{GO07}), commercial real estate~(\citet{Lus94}), condominiums~(\citet{AG92}), fish~(\citet{GGK11}), flowers~(\citet{BOP01}), 
fur~(\citet{LT06}), lobsters~(\citet{SGL17}), jewellery~(\citet{CGV96}), paintings~(\citet{BG97}), stamps~(\citet{TP95}) and wool~(\citet{Bur85}). 

Given the plethora of examples, the question arises as whether this property is actually an anomaly.
In groundbreaking work, \citet{GS01} proved that it is not in sequential auctions with {\em two} bidders. Specifically, in {\em second-price}
sequential auctions with two multiunit-demand buyers, prices are weakly decreasing over time at the unique subgame perfect 
equilibrium that survives the iterative deletion of weakly dominated strategies. Moreover, this result applies regardless of the 
valuation functions of the buyers; the result also extends to the corresponding equilibrium in {\em first-price} sequential auctions.
 It is worth highlighting here two important aspects of the model studied in~\citep{GS01}.
 First, \citeauthor{GS01} consider multiunit-demand buyers whereas prior theoretical work had focussed on the 
 simpler setting of unit-demand buyers. As well as being of more practical relevance (see the many examples 
 above), multiunit-demand buyers can implement more sophisticated bidding strategies. Therefore, it is not unreasonable that
equilibria in multiunit-demand setting may possess more interesting properties than equilibria in the unit-demand setting.
 Second, they study an auction with {\em full information}.
The restriction to full information is extremely useful here as it separates away informational aspects. Hence,
it allows one to focus on the strategic properties caused purely by the sequential sales of items and not by a lack of information.

\subsection{Results and Overview of the Paper}
The result of \citet{GS01} prompts the question of whether or not the declining price anomaly is guaranteed to hold in
general, that is, in sequential auctions with more than two buyers. We answer this question in the negative
by exhibiting a sequential auction with three buyers and eight items where prices initially rise and then fall.
In order to run our experiments that find this counter-example (to the conjecture that prices are weakly decreasing for multi-buyer 
sequential auctions) we study in detail the form of equilibria in sequential auctions.
First, it is important to note that there is a fundamental 
distinction between sequential auctions with two buyers
and sequential auctions with three or more buyers. In the former sequential auction, each subgame reduces to
a standard {\em auction with independent valuations}. We explain this in Section~\ref{sec:2-buyer}, where we 
present the two-buyer full information model of~\citet{GS01}.
In contrast, in a multi-buyer sequential auction each subgame
reduces to an {\em auction with interdependent valuations}. This is explained in Section~\ref{sec:multi-buyer} after  
we present the extension of the model of \citep{GS01} to multi-buyer sequential auctions.
Consequently to study multi-buyer sequential auctions we must study the equilibria of 
auctions with interdependent valuations. A theory of such equilibria was recently developed by \citet{PST12}
via a correspondence with an ascending price mechanism. In particular, as we discuss in 
Section~\ref{sec:equilibria-interdependent}, the ascending price mechanism outputs a unique bid value, 
called the {\em dropout bid} $\beta_i$, for each buyer~$i$. 
For first-price auctions it is known \citep{PST12} that these dropout bids form a subgame perfect equilibrium 
and, moreover, the interval $[0,\beta_i]$ is the exact set of bids that survives {\em all} 
processes consisting of the iterative deletion of strategies that are weakly dominated. 
In contrast, we show in Appendix A that for second-price auctions it may be the case that no bids survive the 
iterative deletion of weakly dominated strategies; however, we prove in Section~\ref{sec:equilibria-interdependent} that the interval
$[0,\beta_i]$ is the exact set of bids for any losing buyer that survives {\em all} 
processes consisting of the iterative deletion of strategies that are weakly dominated {\em by a lower bid}. 

In Section~\ref{sec:counter-example} we describe the counter-example.
We emphasize that there is nothing unusual about our example. The form of the valuation functions
used for the buyers are standard, namely, weakly decreasing marginal valuations. Furthermore, 
the non-monotonic price trajectory does not arise because of the use of an artificial   
tie-breaking rule; the three most natural tie-breaking rules, see Section~\ref{sec:equilibria-sequential},
all induce the same non-monotonic price trajectory. Indeed, we present an even stronger result in Section \ref{sec:general}:
for {\bf any} tie-breaking rule, there is a sequential auction
on which it induces a non-monotonic price trajectory. 

This lack of weakly decreasing prices provides an explanation for why multi-buyer sequential auctions have
been hard to analyze quantitatively. We provide a second explanation in Section~\ref{sec:negative}. There we present
a three-buyer sequential auction that does satisfy weakly decreasing prices but which has subgames where 
some agent has a negative value from winning against one of the two other agents. Again, this contrasts with
the two-buyer case where every agent always has a non-negative value from winning against the other agent
in every subgame.

Finally in Section~\ref{sec:expts}, we describe the results obtained via our large scale experimentations. 
These results show that whilst the declining price anomaly is not universal, exceptions are extremely rare.
Specifically, from a randomly generated dataset of over six million sequential auctions only a $0.000183$ proportion of 
the instances produced non-monotonic price trajectories. Consequently, these experiments are consistent with the practical 
examples discussed in the introduction.
Of course, it is perhaps unreasonable to assume that subgame equilibria arise in practice;
we remark, though, that the use of simple bidding algorithms by bidders may also lead to weakly decreasing prices in a multi-buyer sequential auction.
For example, \citet{Rod09} presents a method called the {\em residual monopsonist procedure} inducing this property.

\section{The Sequential Auction Model}\label{sec:model}
Here we present the full information sequential auction model. There are $T$ identical items and $n$~buyers. 
Exactly one item is sold in each time period over $T$ time periods. Buyer~$i$ has a value $V_i(k)$ 
for winning exactly $k$ items. Thus $V_i(k)= \sum_{\ell=1}^k v_i(\ell)$, where $v_i(\ell)$ is the 
marginal value buyer~$i$ has for obtaining an $\ell$th item. This induces an extensive form game.
To analyze this game it is informative to begin by considering the $2$-buyer case, 
as studied by Gale and Stegeman~\cite{GS01}.

\subsection{The Two-Buyer Case}\label{sec:2-buyer}
During the auction, the relevant history is the number of items each buyer has currently won. 
Thus we may compactly represent the extensive form (``tree") of the auction using a directed graph with a node $(x_1,x_2)$ 
for any pair of non-negative integers that satisfies
$x_1+x_2\le T$. The node $(x_1,x_2)$ induces a subgame with $T-x_1-x_2$ items for sale and where
each buyer~$i$ already possesses $x_i$ items.
Note there is a {\em source node}, $(0,0)$, corresponding to the whole game, and {\em sink nodes} $(x_1,x_2)$, where $x_1+x_2=T$.
The values Buyer~$1$ and Buyer~$2$ have for a sink node $(x_1,x_2)$ are $\Pi_1(x_1,x_2)=V_1(x_1)$ and $\Pi_2(x_1,x_2)=V_2(x_2)$, respectively.
We want to evaluate the values (utilities) at the source node $(0,0)$.
We can do this recursively working from the sinks upwards. Take a node $(x_1,x_2)$, where $x_1+x_2=T-1$.
This node corresponds to the final round of the auction, where the last item is sold, given that each buyer~$i$ has already won $x_i$ items.
The node $(x_1,x_2)$ will have directed arcs to the sink nodes $(x_1+1,x_2)$ and $(x_1,x_2+1)$; these
arcs correspond to Buyer~$1$ and Buyer~$2$ winning the final item, respectively.
For the case of second-price auctions, it is then a weakly dominant strategy for Buyer~$1$ to bid
its marginal value $v_1(x_1+1)=V_1(x_1+1)-V_1(x_1)$; similarly for Buyer~$2$. Of course, this marginal 
value is just $v_1(x_1+1)=\Pi_1(x_1+1, x_2)-\Pi_1(x_1, x_2+1)$, the difference in value between 
winning and losing the final item.
If Buyer~$1$ is the highest bidder at $(x_1,x_2)$, that is, 
$\Pi_1(x_1+1,x_2)-\Pi_1(x_1,x_2+1)\ge \Pi_2(x_1,x_2+1)-\Pi_2(x_1+1,x_2)$, 
then we have that
\begin{eqnarray*}
\Pi_1(x_1,x_2) &=& \Pi_1(x_1+1,x_2) -\big(\, \Pi_2(x_1,x_2+1)-\Pi_2(x_1+1,x_2)\, \big) \\
\Pi_2(x_1,x_2) &=& \Pi_2(x_1+1,x_2)
\end{eqnarray*}
Symmetric formulas apply if Buyer~$2$ is the highest bidder at $(x_1,x_2)$.
Hence we may recursively define a value for each buyer for each node. The iterative elimination of
weakly dominated strategies then leads to a subgame perfect equilibrium~\cite{GS01, BBB08}.

\

\noindent{\tt Example:} Consider a two-buyer sequential auction with two items, where the marginal valuations are $\{v_1(1), v_1(2)\}=\{10,8\}$ and
$\{v_2(1), v_2(2)\}=\{6,3\}$. This game is illustrated in Figure~\ref{fig:2-buyer-second-price}. The base case with the 
values of the sink nodes is shown in Figure~\ref{fig:2-buyer-second-price}(a). The first row in each node refers to Buyer~$1$
and shows the number of items won (in plain text) and the corresponding value (in bold); the
second row refers to Buyer~$2$. The outcome of the second-price
sequential auction, solved recursively, is then shown in Figure~\ref{fig:2-buyer-second-price}(b).
Arcs are labelled by the bid value; here arcs for Buyer~$1$ point left and arcs for Buyer~$2$ point right.
Solid arcs represent winning bids and dotted arcs are losing bids. The equilibrium path is shown in bold.

 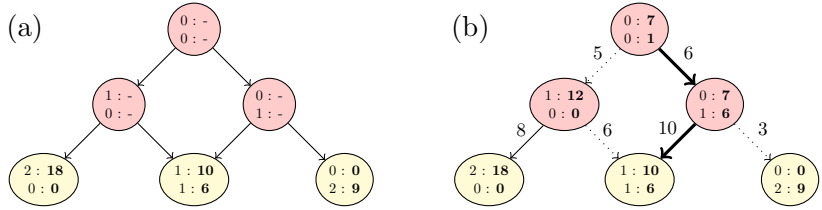
\begin{figure}[h]
\centering
 \begin{tikzpicture}[scale=0.5]
  \node (a) at (-.5,4) {(a)};
\node [ellipse,draw, fill=red!20, align=center, scale=0.5] (v1) at (4,4) {0 : - \\ 0 : - };
\node [ellipse,draw, fill=red!20, align=center, scale=0.5](v2) at (2,2) {1 : - \\ 0 : - };
\node [ellipse,draw, fill=red!20, align=center, scale=0.5](v3) at (6,2) {0 : - \\ 1 : - };
\node[ellipse,draw, fill=yellow!20,align=center, scale=0.5](v4) at (0,0) {2 : {\bf 18} \\ 0 :  {\bf 0}};
\node[ellipse,draw,fill=yellow!20, align=center, scale=0.5](v5) at (4,0) {1 :  {\bf 10} \\ 1 :  {\bf 6} };
\node[ellipse,draw, fill=yellow!20, align=center, scale=0.5](v6) at (8,0) {0 :  {\bf 0} \\ 2 :  {\bf 9} };
\draw [->] (v1) -- (v2);
\draw [->] (v1) -- (v3);
\draw [->] (v2) -- (v4);
\draw [->] (v2) -- (v5);
\draw [->] (v3) -- (v5);
\draw [->] (v3) -- (v6);
\end{tikzpicture}
 \qquad
  \begin{tikzpicture}[scale=0.5]
    \node (b) at (-.5,4) {(b)};
  \node [ellipse,draw, fill=red!20, align=center, scale=0.5] (v1) at (4,4) {0 : {\bf 7} \\ 0 : {\bf 1} };
\node [ellipse,draw, fill=red!20, align=center, scale=0.5](v2) at (2,2) {1 : {\bf 12} \\ 0 : {\bf 0} };
\node [ellipse,draw, fill=red!20, align=center, scale=0.5](v3) at (6,2) {0 : {\bf 7} \\ 1 : {\bf 6} };
\node[ellipse,draw, fill=yellow!20, align=center, scale=0.5](v4) at (0,0) {2 : {\bf 18} \\ 0 :  {\bf 0}};
\node[ellipse,draw, fill=yellow!20, align=center, scale=0.5](v5) at (4,0) {1 :  {\bf 10} \\ 1 :  {\bf 6} };
\node[ellipse,draw, fill=yellow!20, align=center, scale=0.5](v6) at (8,0) {0 :  {\bf 0} \\ 2 :  {\bf 9} };
\draw [->, dotted] (v1) -- (v2) node[very near start,left, scale = .66]{$5\ $};;
\draw [->, very thick] (v1) -- (v3) node[very near start,right, scale = .66]{$\ \ 6$};
\draw [->] (v2) -- (v4) node[very near start,left, scale = .66]{$8\ $} ;
\draw [->, dotted] (v2) -- (v5) node[very near start,right, scale = .66]{$\ 6$};
\draw [->, very thick] (v3) -- (v5) node[very near start,left, scale = .66]{$10\ $};
\draw [->, dotted] (v3) -- (v6) node[very near start,right, scale = .66]{$\ \ 3$};
\end{tikzpicture}
\caption{Second-Price Sequential Auction}
\label{fig:2-buyer-second-price}
\end{figure}
Observe that the {\em declining price anomaly} is exhibited in this example. 
Specifically, in this subgame perfect equilibrium, Buyer~$2$ wins the first item for a price $5$ and
Buyer~$1$ wins the second item for a price $3$. 
As stated, this example is not an exception. Gale and Stegeman~\cite{GS01} showed that 
weakly decreasing prices are a property of $2$-buyer sequential auctions. 
\begin{theorem}\cite{GS01}\label{thm:decreasing}
In a $2$-buyer second-price sequential auction there is a unique equilibrium that survives the iterative deletion of weakly dominated strategies.
Moreover, at this equilibrium prices are weakly declining.
\qed
\end{theorem}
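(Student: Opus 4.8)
\medskip
\noindent\emph{Proof plan.} The plan is to argue by backward induction over the extensive-form tree (equivalently, by induction on the number of remaining items $r=T-x_1-x_2$), first using the recursion to pin down the surviving equilibrium and then reducing ``declining prices'' to a concavity property of the equilibrium value functions, which is itself proved by a more delicate induction. At a node $(x_1,x_2)$ with $r\ge 1$, the current round is an ordinary single-item second-price auction between the two buyers, and the only payoff-relevant quantity for buyer~$i$ is its net continuation value of winning, $b_1(x_1,x_2)=\Pi_1(x_1+1,x_2)-\Pi_1(x_1,x_2+1)$ and $b_2(x_1,x_2)=\Pi_2(x_1,x_2+1)-\Pi_2(x_1+1,x_2)$; here it is essential that there are exactly two buyers, since a loser then has a single opponent and its value of losing does not depend on any further choice. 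Bidding $b_i$ weakly dominates every other bid in that round, so, proceeding from the sinks upwards, the iterative deletion of weakly dominated strategies leaves a unique surviving bid at each node and hence a unique surviving subgame perfect equilibrium (the equivalence of backward induction with iterated weak dominance in a finite perfect-information game is standard~\cite{GS01,BBB08}; a tie $b_1=b_2$ is broken by the auction's tie-breaking rule, and the recursion shows the values $\Pi_i(x_1,x_2)$ --- though not the realised path --- do not depend on this choice). Writing $c_i(x_1,x_2)$ for the child at which buyer~$i$ wins and $p(x_1,x_2)=\min\{b_1(x_1,x_2),b_2(x_1,x_2)\}$ for the price, the recursion yields the identity $\Pi_i(x_1,x_2)=\Pi_i(c_i(x_1,x_2))-p(x_1,x_2)$ for $i=1,2$, valid at every node regardless of who actually wins: each buyer's equilibrium value equals its value had it won, minus the realised price.

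Let $(0,0)=n_0,n_1,\dots,n_{T-1}$ be the equilibrium path, so $n_{t+1}=c_{w_t}(n_t)$ with $w_t$ the round-$t$ winner, and suppose $w_t=1$ (the other case is symmetric). Applying the identity with $i=1$ at $n_t$ and again at $n_{t+1}$ (where it holds whatever $w_{t+1}$ is), and using $n_t=(x_1,x_2)$, $n_{t+1}=(x_1+1,x_2)$, $c_1(n_{t+1})=(x_1+2,x_2)$, the inequality $p(n_t)\ge p(n_{t+1})$ becomes
\[
\Pi_1(x_1+1,x_2)-\Pi_1(x_1,x_2)\ \ge\ \Pi_1(x_1+2,x_2)-\Pi_1(x_1+1,x_2),
\]
i.e.\ discrete concavity of $\Pi_1$ in $x_1$ along the current row. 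So it suffices to prove the structural lemma: \emph{at every reachable node where buyer~$1$ wins, $\Pi_1$ is concave in $x_1$, and, symmetrically, $\Pi_2$ is concave in $x_2$ at every reachable node where buyer~$2$ wins}. The conditioning on the winner is genuinely needed, not cosmetic: in the two-item example above $\Pi_1$ fails to be concave in $x_1$ at $(0,0)$, but this does no harm because buyer~$2$ wins there and $\Pi_2$ \emph{is} concave in $x_2$ at $(0,0)$.

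I would prove this lemma by induction on $r$, carrying the winner-conditioned concavity statements as the hypothesis, re-deriving the recursion at each newly processed node, and using the reformulation ``buyer~$1$ wins at $(x_1,x_2)$ iff $\Pi_1(x_1+1,x_2)+\Pi_2(x_1+1,x_2)\ge\Pi_1(x_1,x_2+1)+\Pi_2(x_1,x_2+1)$'', i.e.\ the winner is the buyer whose winning child carries the larger total buyer surplus. The induction step splits into cases according to how the winner at the current node relates to the winners at the neighbouring cells that enter the concavity increment: a run of consecutive cells with a common winner is benign and telescopes cleanly, whereas a change of winner between adjacent cells is the delicate situation, since then the increment is controlled by value functions at cells with differing winners, and one must check that the winner rule and the inductive concavity bounds reinforce --- rather than oppose --- one another. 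I expect this case analysis, essentially the argument of~\cite{GS01}, to be the main obstacle. Finally, the extension to first-price sequential auctions needs essentially no extra work: with full information a one-shot first-price auction between two buyers with values $b_1\ge b_2$ has the winner pay (in the limit) $b_2$ at the surviving equilibrium, so the dropout bids induce exactly the same values $\Pi_i$ and the same prices, and the whole argument transfers.
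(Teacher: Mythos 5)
First, note that the paper does not actually prove this theorem: it is quoted from Gale and Stegeman \cite{GS01}, and the only in-paper material is the recursive definition of the node values in Section~2.1, which your plan reproduces correctly. The part of your argument that is carried out is sound and rather elegant: the identity $\Pi_i(x_1,x_2)=\Pi_i(c_i(x_1,x_2))-p(x_1,x_2)$ does hold for the loser as well as the winner (if buyer~$1$ wins, the second price equals buyer~$2$'s bid, so $\Pi_2(x_1,x_2+1)-p=\Pi_2(x_1+1,x_2)=\Pi_2(x_1,x_2)$), it correctly converts $p(n_t)\ge p(n_{t+1})$ into discrete concavity of the round-$t$ winner's value along its own axis, and your total-surplus characterization of the winner is the right rearrangement of $b_1\ge b_2$. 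The uniqueness claim also follows from the backward-induction setup as you describe.

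The genuine gap is that the winner-conditioned concavity lemma, which you correctly identify as carrying the entire content of the declining-price claim, is never proved; you only state that you ``expect'' the case analysis to go through. This is not a routine verification, and the inductive hypothesis as you have stated it does not obviously close. Concavity of $\Pi_1$ along rows at nodes where buyer~$1$ wins controls only within-row increments, but when the winner at $(x_1+1,x_2)$ differs from the winner at $(x_1,x_2)$ the recursion expresses $\Pi_1(x_1+1,x_2)$ through $\Pi_1(x_1+1,x_2+1)$ and through $\Pi_2$-values at off-row cells, so the increment $\Pi_1(x_1+2,x_2)-\Pi_1(x_1+1,x_2)$ is not bounded by the hypothesis you carry. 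Closing this case requires additional invariants propagated through the induction --- in particular the non-negativity of each buyer's value of winning over losing in every subgame (precisely the two-buyer property that Section~\ref{sec:negative} of this paper shows fails with three buyers), together with comparisons linking adjacent rows and columns. Until the lemma is established with a strong enough inductive hypothesis, your proposal proves the uniqueness statement but reduces the declining-price statement to an unproven claim that is essentially equivalent to it.
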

 
We remark that the subgame perfect equilibrium that survives the iterative elimination of 
weakly dominated strategies is unique in terms of
the values at the nodes. Moreover, given a fixed tie-breaking rule, 
the subgame perfect equilibrium also has a unique equilibrium path in each subgame. 

In addition, Theorem~\ref{thm:decreasing} also applies to first-price sequential auctions. In this case, 
to ensure the existence of an equilibrium, we make the standard assumption that there is a fixed small bidding increment.
That is, for any price $p$ there is a unique maximum price smaller than $p$.
Given this, for the example above, the subgame perfect equilibrium using a first-price sequential auction is as 
shown in Figure~\ref{fig:2-buyer-first-price}. Here we use the notation $p^+$ to denote a winning bid of value equal to $p$,
and the notation $p$ to denote a losing bid equal to maximum value smaller than $p$.

 \begin{figure}[h]
\centering
  \begin{tikzpicture}[scale=0.5]
  \node [ellipse,draw,fill=red!20, align=center, scale=0.5] (v1) at (4,4) {0 : {\bf 7} \\ 0 : {\bf 1} };
\node [ellipse,draw, fill=red!20, align=center, scale=0.5](v2) at (2,2) {1 : {\bf 12} \\ 0 : {\bf 0} };
\node [ellipse,draw, fill=red!20,align=center, scale=0.5](v3) at (6,2) {0 : {\bf 7} \\ 1 : {\bf 6} };
\node[ellipse,draw, fill=yellow!20, align=center, scale=0.5](v4) at (0,0) {2 : {\bf 18} \\ 0 :  {\bf 0}};
\node[ellipse,draw, fill=yellow!20,align=center, scale=0.5](v5) at (4,0) {1 :  {\bf 10} \\ 1 :  {\bf 6} };
\node[ellipse,draw, fill=yellow!20, align=center, scale=0.5](v6) at (8,0) {0 :  {\bf 0} \\ 2 :  {\bf 9} };
\draw [->, dotted] (v1) -- (v2) node[very near start,left, scale = .66]{$5\ $};;
\draw [->, very thick] (v1) -- (v3) node[very near start,right, scale = .66]{$\ \ 5+$};
\draw [->] (v2) -- (v4) node[very near start,left, scale = .66]{$6+\ $} ;
\draw [->, dotted] (v2) -- (v5) node[very near start,right, scale = .66]{$\ 6$};
\draw [->, very thick] (v3) -- (v5) node[very near start,left, scale = .66]{$3+\ $};
\draw [->, dotted] (v3) -- (v6) node[very near start,right, scale = .66]{$\ \ 3$};
\end{tikzpicture}
\caption{First-Price Sequential Auction}
\label{fig:2-buyer-first-price}
\end{figure}
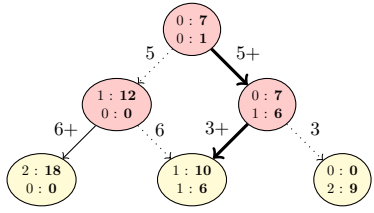

Observe that the resultant prices on the equilibrium path are more easily apparent in Figure~\ref{fig:2-buyer-first-price}
than in Figure~\ref{fig:2-buyer-second-price}. For this reason, all the figures we present in the rest of the 
paper will be for  first-price auctions; equivalent figures can can be drawn for the case of second-price auctions.

So the decreasing price anomaly holds in two-buyer sequential auctions.
The question of whether or not it applies to sequential auctions with more than two buyers remained open. 
We resolve this question in the rest of this paper. To do this, let's first study equilibria in the
full information sequential auction model when there are more than two buyers.

\subsection{The Multi-Buyer Case}\label{sec:multi-buyer}
The underlying model of~\citep{GS01} extends simply to 
sequential auctions with $n\ge 3$ buyers. There is a node $(x_1,x_2, \dots, x_n)$ for each set of
non-negative integers satisfying $ \sum_{i=1}^n x_i \le T$. There is a directed arc from $(x_1,x_2, \dots, x_n)$ to 
$(x_1,x_2, \dots, x_{j-1}, x_j+1,x_{j+1},\dots x_n)$ for each $1\le j\le n$. Thus each non-sink node has $n$ out-going arcs.
This is problematic: Whilst in the final time period each buyer has a value for winning and a value for losing, this is no longer 
the case recursively in earlier time periods. Specifically, buyer~$i$ has value for winning, but $n-1$ (different) values for losing
depending upon the identity of the buyer $j\neq i$ who actually wins. Thus rather than each node corresponding to a standard auction,
each node in the multi-buyer case corresponds to an auction with interdependent valuations. 

Formally, an {\em auction with interdependent valuations} is a single-item auction where each buyer~$i$ has a value $v_{i,i}$
for winning  the item and, for each buyer~$j\neq i$, buyer~$i$ has value $v_{i,j}$ if buyer~$j$ wins the item.
These auctions, also called {\em auctions with externalities}, were introduced by Funk~\cite{Fun96} and by Jehiel and Moldovanu~\cite{JM96}. 
Their motivations were applications where losing participants were not indifferent to the identity of the winning buyer;
examples include firms seeking to purchase a patented innovation, take-over acquisitions of a smaller company in an 
oligopolistic market, and sports teams competing to sign a star athlete. 

Therefore to understand multi-buyer sequential auctions we must first understand equilibria in auctions with interdependent valuations.
This is actually not a simple task. Indeed such an understanding was only recently provided by Paes Leme et al.~\cite{PST12}.

\subsection{Equilibria in Auctions with Interdependent Valuations}\label{sec:equilibria-interdependent}

We can explain the result of~\citep{PST12} via an ascending price auction.

\subsubsection{An Ascending Price Mechanism}\label{sec:ascending}
Imagine a two-buyer ascending price auction where the valuations of the buyers are $v_1>v_2$. 
The requested price $p$ starts at zero and continues to rise until the point where the second buyer drops out.
Of course, this happens when the price reaches $v_2$, and so Buyer~$1$ wins for a payment $p^+=v_2$.
But this is exactly the outcome expected from a first-price auction: Buyer~$2$ loses with bid of $p$ and Buyer~$1$ 
wins with a bid of $p^+$. To generalize this to multi-buyer settings we
can view this process as follows. At a price $p$, buyer~$i$ remains in the auction as long as there
is at least one buyer~$j$ {\em still in the auction} who buyer~$i$ is willing to pay a price $p$
to beat; that is, $v_{i,i}-p>v_{i,j}$. The last buyer to drop out wins at the corresponding price.
For example, in the two-buyer example above, Buyer~$2$ drops out
at price $p=v_2$ as it would rather lose to Buyer~$1$ than win above that price. Therefore, at price $p^+$ there is no
buyer still in the auction that Buyer~$1$ wishes to beat (because there are no other buyers remaining in the
auction at all!). Thus Buyer~$1$ drops out at $p^+$ and, being the last buyer to drop out, wins at that price.  

Observe that, even in the multi-buyer setting, this procedure produces a unique {\em dropout bid} $\beta_i$ for each buyer~$i$.
To illustrate this, two auctions with interdependent valuations are shown in Figure~\ref{fig:4-buyer-example}. 
In these diagrams the label of an arc from buyer~$i$ to buyer~$j$ is $w_{i,j}=v_{i,i}-v_{i,j}$. That is, buyer~$i$ is willing to pay up to
$w_{i,j}$ to win {\em if the alternative is that buyer~$j$ wins the item}. Now consider running our ascending price procedure
for these auctions. In Figure~\ref{fig:4-buyer-example}(a),
Buyer 1 drops out when the price reaches $18$. Since Buyer~$1$ is no longer active in the auction, Buyer~$4$ drops out 
when the price reaches $23$. At this point, Buyer~$2$ and Buyer~$3$ are left to compete for the item. Buyer~$3$ wins when Buyer~$2$ drops
out at price $31$. Thus the drop-out bid of Buyer~$2$ is $31^+$.
Observe that Buyer~$2$ loses despite having very high values for winning (against Buyer~$1$ and Buyer~$4$).

The example of Figure~\ref{fig:4-buyer-example}(b) with dropout bid vector $(\beta_1,\beta_2,\beta_3, \beta_4)=(24,24,24,24^+)$ 
is more subtle. Here Buyer~$2$ drops out at price $24$. But
Buyer~$3$ only wanted to beat Buyer~$2$ at this price so it then immediately drops out at the same price. Now
Buyer~$1$ only wanted to beat Buyer~$2$ and Buyer~$3$ at this price, so it then immediately drops out at the same price.
This leaves Buyer~$4$ the winner at price $24^+$.

 \begin{figure}[!h]
\centering
 \begin{tikzpicture}[scale=0.3]
  \node (Ex1) at (-12,0) {(a)};
 \node[rectangle,draw](B1) at (-6,0) {Buyer 1};
\node[rectangle,draw, scale=1](B2) at (-6,-6) {Buyer 2};
\node[rectangle,draw, scale=1](B3) at (6,-6) {Buyer 3};
\node[rectangle,draw, scale=1](B4) at (6,0) {Buyer 4};

\draw [->,  thick] (B1) -- (-6, -2.5) node[midway, left, scale = .66]{$18\ $};
\draw [->,  thick] (B1) -- (-0.5,0) node[midway, above, scale = .66]{$13\ $};
\draw [->, thick] (B1) -- (-.5, -2.5) node[midway,above, scale = .66]{$\ \ 14$};

\draw [->, thick] (B2) -- ( -6 ,-3.5) node[midway,left, scale = .66]{$97\ $};
\draw [->,  thick] (B2) -- (-.5,-6) node[midway, below, scale = .66]{$31\ $};
\draw [->,  thick] (B2) -- (-.5,-3.5) node[midway, below, scale = .66]{$74\ $};

\draw [->, thick] (B3) -- ( .5 ,-6) node[midway,below, scale = .66]{$33$};
\draw [->, thick] (B3) -- (.5,-3.5) node[midway, below, scale = .66]{$\ \ 12$};
\draw [->, thick] (B3) -- (6,-3.5) node[midway, right, scale = .66]{$\ \ 11$};

\draw [->, thick] (B4) -- ( .5 ,-2.5) node[midway,above, scale = .66]{$10$};
\draw [->, thick] (B4) -- (6,-2.5) node[midway, right, scale = .66]{$\ \ 23$};
\draw [->, thick] (B4) -- (.5,0) node[midway, above, scale = .66]{$\ \ 35$};

 \end{tikzpicture}
\quad \quad \quad 
  \begin{tikzpicture}[scale=0.3]
    \node (Ex2) at (-12,0) {(b)};
 \node[rectangle,draw](B1) at (-6,0) {Buyer 1};
\node[rectangle,draw, scale=1](B2) at (-6,-6) {Buyer 2};
\node[rectangle,draw, scale=1](B3) at (6,-6) {Buyer 3};
\node[rectangle,draw, scale=1](B4) at (6,0) {Buyer 4};

\draw [->,  thick] (B1) -- (-6, -2.5) node[midway, left, scale = .66]{$37\ $};
\draw [->,  thick] (B1) -- (-0.5,0) node[midway, above, scale = .66]{$22\ $};
\draw [->, thick] (B1) -- (-.5, -2.5) node[midway,above, scale = .66]{$\ \ 59$};

\draw [->, thick] (B2) -- ( -6 ,-3.5) node[midway,left, scale = .66]{$17\ $};
\draw [->,  thick] (B2) -- (-.5,-6) node[midway, below, scale = .66]{$13\ $};
\draw [->,  thick] (B2) -- (-.5,-3.5) node[midway, below, scale = .66]{$24\ $};

\draw [->, thick] (B3) -- ( .5 ,-6) node[midway,below, scale = .66]{$63$};
\draw [->, thick] (B3) -- (.5,-3.5) node[midway, below, scale = .66]{$\ \ 19$};
\draw [->, thick] (B3) -- (6,-3.5) node[midway, right, scale = .66]{$\ \ 21$};

\draw [->, thick] (B4) -- ( .5 ,-2.5) node[midway,above, scale = .66]{$10$};
\draw [->, thick] (B4) -- (6,-2.5) node[midway, right, scale = .66]{$\ \ 14$};
\draw [->, thick] (B4) -- (.5,0) node[midway, above, scale = .66]{$\ \ 35$};

 \end{tikzpicture}\caption{{\sc Drop-Out Bid Examples.} {\em In these two examples the dropout 
 bid vectors $(\beta_1,\beta_2,\beta_3, \beta_4)$ are 
 $(18,31, 31^+, 23)$ and $(24, 24, 24, 24^+)$, respectively.} }\label{fig:4-buyer-example}
\end{figure}
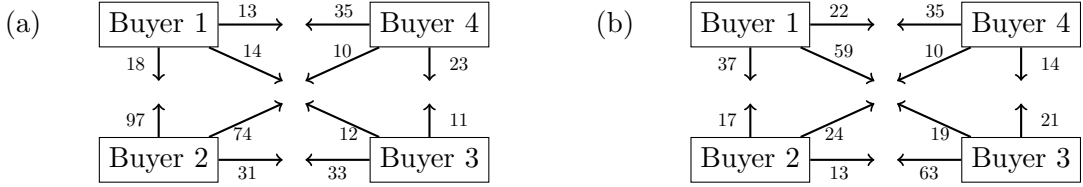

\subsubsection{Dropout Bids and the Iterative Deletion of Weakly Dominated Strategies}\label{sec:iterative-deletion}
As well as being solutions to the ascending price auction, the dropout bids have a much stronger property that
makes them the natural and robust prediction for auctions with interdependent valuations. Specifically, Paes Leme et al.~\cite{PST12}
proved that, for each buyer~$i$, the interval $[0,\beta_i]$ is the set of strategies that survive {\bf any sequence} consisting of the
iterative deletion of weakly dominated strategies. This is formalized as follows.
Take an $n$-buyer game with strategy sets $S_1, S_2,\dots, S_n$ and utility functions 
$u_i:S_1\times S_2\times \cdots \times S_n\rightarrow \mathbb{R}$.
Then $\{S_i^\tau\}_{i,\tau}$ is a {\em valid sequence} for the iterative deletion of weakly 
dominated strategies if for each $\tau$ there is a buyer~$i$
such that $(i) S_j^\tau=S_j^{\tau-1}$ for each buyer~$j\neq i$ and
$(ii) S_i^\tau\subset S_i^{\tau-1}$ where for each strategy $s_i\in S_i^{\tau-1}\setminus S_i^{\tau}$ there is an $\hat{s}_i\in S_i^{\tau}$
such that $u_i(\hat{s}_i, s_{-i})\ge u_i(s_i, s_{-i})$ for all $s_{-i}\in \prod_{j:j\neq i}\, S^{\tau}_j$, and with strict inequality for at least one
$s_{-i}$. 

We say that a strategy $s_i$ for buyer~$i$ {\em survives} the iterative deletion of weakly dominated strategies if for any valid sequence
 $\{S_i^\tau\}_{i,\tau}$ we have $s_i\in \bigcap_{\tau} S^\tau_i$.

\begin{theorem}\cite{PST12}\label{thm:first-multi}
Given a first-price auction with interdependent valuations, for each buyer~$i$, the set of bids that survive 
the iterative deletion of weakly dominated strategies is exactly $[0,\beta_i]$. \qed
\end{theorem}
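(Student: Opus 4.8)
The plan is to prove separately that every bid in $[0,\beta_i]$ survives every valid deletion sequence and that no bid above $\beta_i$ survives, bearing in mind that ``survives'' quantifies over \emph{all} valid sequences and that iterated weak dominance is order-dependent in general. Throughout, write $\sigma(1),\dots,\sigma(n)$ for the order in which buyers drop out in the ascending mechanism of Section~\ref{sec:ascending}, with drop-out prices $\beta_{\sigma(1)}\le\cdots\le\beta_{\sigma(n)}$, and set $w_{i,j}=v_{i,i}-v_{i,j}$ as in Figure~\ref{fig:4-buyer-example}. The one structural fact I would extract from the mechanism is this: at the moment buyer~$i$ drops out at price~$\beta_i$, every opponent still active satisfies $w_{i,j}\le\beta_i$, equivalently $v_{i,j}\ge v_{i,i}-\beta_i$, while every opponent that has already dropped out has drop-out price at most~$\beta_i$.

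To show every bid in $[0,\beta_i]$ survives, I would argue by induction on the step~$\tau$ of an arbitrary valid sequence $\{S^\tau_j\}$, maintaining the invariant $[0,\beta_j]\subseteq S^\tau_j$ for all~$j$. Because a weak-dominance relation over the current space $\prod_{j\ne i}S^{\tau-1}_j$ restricts to one over the smaller space $\prod_{j\ne i}[0,\beta_j]$, it suffices to show that over the latter no bid~$\hat b$ weakly dominates any bid $b\le\beta_i$ for buyer~$i$; for each $\hat b\ne b$ I would exhibit a profile~$s_{-i}$, with each coordinate in its own interval, on which~$b$ strictly outperforms~$\hat b$. If $\hat b>b$, let all opponents bid below~$b$: buyer~$i$ then wins the item with either bid but pays the strictly smaller price~$b$. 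If $\hat b<b$, use that, since $b\le\beta_i$, buyer~$i$ is still active at price~$b$ in the mechanism, so some opponent~$j$ with $\beta_j\ge b$ satisfies $v_{i,i}-b>v_{i,j}$; have~$j$ bid just above~$\hat b$ but below~$b$ and all others bid~$0$, so that bid~$b$ wins the item at a price buyer~$i$ strictly prefers to losing to~$j$, while bid~$\hat b$ loses to~$j$. Hence $b$ is never the deleted strategy and the invariant is preserved.

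To show that no bid $b>\beta_i$ survives, it is enough to exhibit a \emph{single} valid sequence deleting it, since a bid fails to survive as soon as some valid sequence removes it. I would process the buyers in drop-out order. First, for $\sigma(1)$: being first out, $\beta_{\sigma(1)}=\max_{j}w_{\sigma(1),j}$, so $v_{\sigma(1),j}\ge v_{\sigma(1),\sigma(1)}-\beta_{\sigma(1)}$ for every~$j$; a short case analysis on whether $\sigma(1)$ wins or loses with bid~$b$ shows each $b>\beta_{\sigma(1)}$ is weakly dominated by $\beta_{\sigma(1)}$ over the full original profile space, so all such bids can be deleted. Having trimmed $\sigma(1),\dots,\sigma(k-1)$ to their intervals, the same comparison applies to~$\sigma(k)$: if lowering a bid $b>\beta_{\sigma(k)}$ to $\beta_{\sigma(k)}$ turns a win into a loss, the new winner's bid exceeds $\beta_{\sigma(k)}$ and therefore belongs to an opponent~$j$ still active when $\sigma(k)$ dropped out (the earlier buyers bid at most $\beta_{\sigma(\ell)}\le\beta_{\sigma(k)}$ and cannot undercut $\beta_{\sigma(k)}$), so the resulting utility $v_{\sigma(k),j}\ge v_{\sigma(k),\sigma(k)}-\beta_{\sigma(k)}>v_{\sigma(k),\sigma(k)}-b$ makes the loss preferable, while an all-zero profile supplies the strictness. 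Thus $b>\beta_{\sigma(k)}$ is weakly dominated by $\beta_{\sigma(k)}$ in the trimmed game and can be deleted, completing the induction.

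The main obstacle is not any individual case check but making the two halves interlock robustly and handling the endpoints. In the second direction the deletions must be carried out strictly in drop-out order --- a bid $b>\beta_{\sigma(k)}$ is generally \emph{not} dominated until the earlier buyers have been trimmed, as Figure~\ref{fig:4-buyer-example}(a) makes visible --- and one must check that the ``$p$ versus $p^{+}$'' bookkeeping stays consistent with the cascades of Figure~\ref{fig:4-buyer-example}(b). In the first direction the separating-profile constructions go through cleanly in the open interval, but the endpoints $b=0$ and $b=\beta_i$ are delicate: there the comparison can hinge on the bidding increment and on the tie-breaking rule, i.e.\ exactly where ties in the first-price auction meet the tie-breaking rule, and this is what requires the most care. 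It is also here that the result is genuinely special: for arbitrary games iterated weak dominance depends on the order of deletions, and the theorem is asserting confluence to the product of the intervals $[0,\beta_i]$.
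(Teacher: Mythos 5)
First, a point of comparison: the paper itself gives no proof of Theorem~\ref{thm:first-multi} --- it is imported from \cite{PST12} and closed immediately --- so the only in-paper argument to measure your attempt against is the proof of the companion Theorem~\ref{thm:second-multi}. Your two halves are essentially that proof transposed to the first-price setting: the upper bound processes buyers in drop-out order and deletes everything above $\beta_{\sigma(k)}$ using the fact that no opponent still active at price $\beta_{\sigma(k)}$ is worth beating there, and the lower bound takes a putative deletion of a bid $b\le\beta_i$ and exhibits a witness profile built from an opponent $j$ with $\beta_j\ge b$ whom buyer~$i$ still wishes to beat. The genuinely first-price ingredient --- that when both bids win the lower one is strictly better, which simultaneously supplies the strictness for the downward deletions and rules out domination by a higher bid --- is correctly in place, and is exactly why the first-price statement needs no restriction to domination by a lower bid.

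That said, the two issues you defer are not mere bookkeeping, and one is a real gap as written. (i) Your parenthetical claim that the already-trimmed buyers ``cannot undercut $\beta_{\sigma(k)}$'' fails under ties: a buyer $\sigma(\ell)$ with $\ell<k$ and $\beta_{\sigma(\ell)}=\beta_{\sigma(k)}$ may bid exactly $\beta_{\sigma(k)}$ and, under an adverse tie-break, become the winner when $\sigma(k)$ lowers its bid from $b$; since $w_{\sigma(k),\sigma(\ell)}$ may exceed $b$, the dominance then breaks. The paper's proof of Theorem~\ref{thm:second-multi} closes exactly this hole by ordering tied buyers by the mechanism's elimination order and arguing in its case~(iii) that a losing bid of $\beta_i$ can only lose to a higher-indexed buyer; you need the same convention made explicit. (ii) More seriously, your survival argument only covers $[0,\beta_i)$: at $b=\beta_i$ the required opponent with $v_{i,i}-b>v_{i,j}$ and $\beta_j\ge b$ does not exist (that is precisely what it means to drop out at $\beta_i$), and for a buyer exactly indifferent at its drop-out price --- e.g.\ the first buyer to drop out, where $\beta_i=\max_j w_{i,j}$ is attained --- every profile gives a lower bid at least as much utility, with strict advantage whenever both bids win. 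Whether the endpoint itself survives therefore turns entirely on the bid-increment and tie-breaking conventions; this is the same boundary phenomenon that forces the paper to state Theorem~\ref{thm:second-multi} only for losing buyers and to introduce $\gamma_n$ for the winner. A complete proof must pin those conventions down rather than flag them.
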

An exact analogue of Theorem~\ref{thm:first-multi} does {\em not} hold for second-price auctions
with interdependent valuations. We prove this in Appendix~A where we present 
an example in which the set of strategies
that survive the iterative deletion of weakly dominated strategies is empty in a second-price auction.
However, consideration of that example shows that the problem occurs when a strategy is deleted because it
is weakly dominated by a {\bf higher} value bid. Observe that this can never happen for a potentially
winning bid in a first-price auction. Thus Theorem~\ref{thm:first-multi} still holds in first-price auctions
when we restrict attention to sequences consisting of the iterative deletion of strategies that are 
weakly dominated by a lower bid. Indeed, we can prove the corresponding theorem also holds
for second-price auctions.

\begin{theorem}\label{thm:second-multi}
Given a second-price auction with interdependent valuations, for each losing buyer~$i$, the set of bids that survive 
the iterative deletion of strategies that are weakly dominated by a lower bid is exactly $[0,\beta_i]$. \qed
\end{theorem}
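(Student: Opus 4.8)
The plan is to mirror the structure of the proof of Theorem~\ref{thm:first-multi} from \citep{PST12}, but track carefully \emph{why} each deleted strategy is dominated, ensuring throughout that we only ever invoke domination by a \emph{lower} bid. There are two directions to establish for a fixed losing buyer~$i$: (1) every bid in $[0,\beta_i]$ survives every valid sequence of deletions-by-a-lower-bid, and (2) every bid strictly above $\beta_i$ is eventually deleted by some such sequence.

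For direction (1), I would argue that for any bid $b_i\in[0,\beta_i]$ there is no lower bid $b_i'<b_i$ that weakly dominates it, even after other buyers' strategy sets have shrunk to $[0,\beta_j]$. The key observation is that $\beta_i$ is, by construction of the ascending mechanism, the largest price at which there still exists an active buyer~$j$ that buyer~$i$ is willing to beat, i.e.\ $v_{i,i}-\beta_i \ge v_{i,j}$ for that~$j$, with $j$ also active up to $\beta_i$ (so $\beta_j\ge \beta_i$ in the relevant sense). Hence against the profile where $j$ bids just below $b_i$ and everyone else bids $0$, bidding $b_i$ wins the item at a price below $b_i$ and yields utility $v_{i,i}-(\text{price}) \ge v_{i,i}-b_i \ge v_{i,i}-\beta_i \ge v_{i,j} = $ the utility from any lower bid that loses to~$j$. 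This shows no lower bid weakly dominates $b_i$, so $b_i$ is never removed; I would need to check this argument is robust to the order of deletions, which follows because the surviving sets of the other buyers always contain $[0,\beta_j]$ by an inductive application of the same reasoning (this is the standard ``order-independence'' part, identical in spirit to \citep{PST12}).

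For direction (2), I would show that any bid $b_i>\beta_i$ is weakly dominated by $\beta_i$ (a lower bid) once the other buyers' strategies have been pruned to $[0,\beta_j]$. By definition of dropout, at every price $p\in(\beta_i,b_i]$ there is \emph{no} buyer~$j$ still active (i.e.\ with $\beta_j\ge p$) that buyer~$i$ wishes to beat at price~$p$; every buyer $j$ with $\beta_j\ge p$ satisfies $v_{i,i}-p < v_{i,j}$. So in any profile where $i$ would win by bidding $b_i$ but lose by bidding $\beta_i$, the winner under $\beta_i$ is some such~$j$, and $i$ strictly prefers losing to~$j$ over winning at price~$\geq\beta_i$; in all other profiles the outcome for~$i$ is unchanged. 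This gives weak domination by the lower bid $\beta_i$, with strict improvement on at least one profile (e.g.\ when exactly one other buyer bids in $(\beta_i,b_i]$). Since $i$ is a losing buyer, I must also confirm that pruning down to $\beta_i$ does not require us to have already deleted some $j$'s bids by a \emph{higher} bid — but the only pruning needed for $j$ is the deletion of bids above $\beta_j$, which by the very same argument (applied to whichever buyer currently has the largest surviving bid) is always a deletion by a lower bid; one runs the deletions in decreasing order of dropout value.

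The main obstacle I anticipate is the bookkeeping in the inductive/order-independence step: in the second-price setting the payment depends on the second-highest bid, so when I claim ``bidding $b_i$ wins at a price below $b_i$'' I must be careful that the relevant competitor's surviving strategy set really does let it bid arbitrarily close to $\beta_i$ from below, and that ties (two buyers at exactly $\beta_i$) are handled — this is where the restriction to \emph{losing} buyer~$i$ and to domination \emph{by a lower bid} is doing real work, and where the Appendix~A counterexample shows the unrestricted statement fails. I would isolate this as a lemma: after deleting, for every buyer~$j$, all bids exceeding $\beta_j$ (in decreasing order of $\beta_j$, each such deletion witnessed by the lower bid $\beta_j$), no further deletion-by-a-lower-bid is possible for any surviving bid of any losing buyer, and conversely every such deletion is forced. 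Combining the two directions then yields exactly $[0,\beta_i]$.
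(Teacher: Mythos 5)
Your overall architecture matches the paper's: both directions hinge on the same two facts (a bid above $\beta_i$ is weakly dominated by the lower bid $\beta_i$ once the ``right'' other buyers have been pruned, and a bid $q\le\beta_i$ is protected by a witness profile in which some buyer $k$ with $\beta_k\ge\beta_i$ bids strictly between the two bids being compared). Your direction (1) is essentially the paper's second claim, including the first-deletion/order-robustness argument; it goes through once you make the inequality chain strict, which it is, since the price paid is strictly below $b_i\le\beta_i$ and $v_{i,i}-p>v_{i,j}$ for all $p<\beta_i$ by definition of the dropout bid.

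The genuine gap is in direction (2): you propose running the deletions in \emph{decreasing} order of dropout value, and that order does not produce valid deletions. A losing buyer with a large dropout $\beta_i$ may be willing to pay far more than $\beta_i$ to beat a buyer $k$ who dropped out much earlier, i.e.\ $\beta_k<\beta_i$ but $v_{i,i}-v_{i,k}\gg\beta_i$; see Buyer~2 in Figure~\ref{fig:4-buyer-example}(a), with $\beta_2=31$ but willingness $97$ to beat Buyer~1, who drops at $18$. As long as buyer $k$'s strategy set is still $[0,\infty)$, buyer $k$ can bid, say, $50$, and then buyer $i$'s bid of $60$ wins at price $50$ and is \emph{strictly better} than losing to $k$ with a bid of $\beta_i=31$; so $60$ is not weakly dominated by $\beta_i$ at that stage, and the very first deletion step of your sequence is invalid. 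The paper's induction goes in \emph{increasing} order of $\beta$: the base case is the buyer with the smallest dropout, for whom $\beta_1=\max_{j}(v_{1,1}-v_{1,j})$ really is the most it would ever pay to beat anyone, so its high bids are dominated with no prior pruning of others; and at step $i$ every buyer $j<i$ is already confined to $[0,\beta_j]\subseteq[0,\beta_i]$, so in any profile where $\beta_i$ loses the winner is a buyer $j>i$ --- exactly a buyer that $i$ has no wish to beat at price $\beta_i$, while the unrestricted sets of the buyers $j>i$ supply the profile witnessing strictness. Reversing your order (and noting, as the paper does, that the winning buyer can only be pruned to $[0,\gamma_n]\supseteq[0,\beta_n]$, which is why the statement is restricted to losing buyers) repairs the argument.
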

\begin{proof}
First we claim that for any losing buyer~$i$ and any price $p>\beta_i$ there is a sequence of
iterative deletions of strategies that are weakly dominated by a lower bid that leads to the deletion of bid $p$ from $S^\tau_i$.
Without loss of generality,  we may order the buyers such that $\beta_1\le \beta_2 \le \cdots\le \beta_n$;
in the case of a tie the buyers are placed in the order they were deleted by the tie-breaking rule.
Initially $S_i^0=[0,\infty)$, for each buyer~$i$. 
We now define a valid sequence such that $S^i_i=[0,\beta_i]$.
We proceed by induction on the label of the buyers. For the base case observe that for Buyer~$1$ we know 
$\beta_1 = \max\limits_{j:j\neq i}\, (v_{i,i} - v_{i,j} )$ is the highest price it wants to pay to beat anyone else.
Suppose Buyer~$1$ bids $p>\beta_1$. Take any set of bids $b_{-1}\in \times_{j:j \ge 2} S^0_j$.
We have three cases: \\
(i) Both bids $p$ and $\beta_1$ are winning bids against  $b_{-1}$. Then, as this is a second-price auction,
Buyer~$1$ is indifferent between the two bids.\\
(ii) Both bids $p$ and $\beta_1$ are losing bids against  $b_{-1}$. Then 
Buyer~$1$ is indifferent between the two bids.\\
(iii) Bid $p$ is a winning bid but $\beta_i$ is a losing bid against $b_{-1}$. Then since the winning price
is at least $\beta_1$, Buyer~$1$ strictly prefers to lose rather than win.
Moreover, since $S_j^0=[0,\infty)$, there is a set of bids $b_{-1}$ by the other buyers such that
Buyer~$1$ strictly prefers to lose rather than win.

Thus the bid $p$ is weakly dominated by the lower bid $\beta_1$. Since this applies to any $p>\beta_1$,
in Step~$1$ we may delete every bid for Buyer~$1$ above $\beta_1$. Therefore
$S^1_1=[0,\beta_1]$ and $S^1_j=[0,\infty]$ for each buyer $j\ge 2$.

For the induction hypothesis assume $S^{i-1}_j=[0,\beta_j]$, for all $j<i$ and $S^{i-1}_j=[0,\infty)$, for all $j\ge i$.
Now take a losing buyer~$i$ and any set of bids $b_{-i}\in \times_{j:j \neq i} S^{i-1}_j$.
Again, we have three cases: \\
(i) Both bids $p$ and $\beta_i$ are winning bids against  $b_{-i}$. Then, as this is a second-price auction,
buyer~$i$ is indifferent between the two bids.\\
(ii) Both bids $p$ and $\beta_1$ are losing bids against  $b_{-i}$. Then 
buyer~$i$ is indifferent between the two bids.\\
(iii) Bid $p$ is a winning bid but $\beta_i$ is a losing bid against $b_{-i}$. Then since $\beta_i$ is a losing bid under the tie-breaking rule, 
it must be the case that the winning bid is from a buyer~$j$ where $j> i$. But, by definition of $\beta_i$, there is no
buyer~$j$, with $j>i$, that buyer~$i$ wishes to beat at price $\beta_i$.

So buyer~$i$ prefers the bid $\beta_i$ to the bid $p$. Moreover, since any buyer $j:j>i$ has $S^{i-1}_j=[0,\infty)$, this preference 
is strict for some feasible choice of bids for the other buyers. 
Thus, for buyer~$i$, the bid $p$ is weakly dominated by the lower bid $\beta_i$, and this applies to
every $p>\beta_i$.
Ergo, in Step~$i$ we may delete every bid for buyer~$i$ above $\beta_i$. Therefore
$S^{i}_j=[0,\beta_i]$, for all $j<i+1$ and $S^{i-1}_j=[0,\infty)$, for all $j\ge i+1$. The claim then follows by induction.
So, for any losing buyer~$i$ we have that no bid greater than $\beta_i$ survives the iterative deletion of strategies that
are weakly dominated by a lower bid.

Observe that the above arguments also apply for the winning buyer, that is, buyer~$n$. Except, as there are no higher indexed
buyers, it is not the case that $\beta_n$ strictly dominates any bid $p>\beta_n$. Indeed, buyer~$n$ is indifferent
between all bids in the range $[\beta_n,\gamma_n]$, where $\gamma_n$ is the maximum value the buyer has
for beating any buyer~$j$ with dropout bid $\beta_j=\beta_n$. Observe, $\gamma_n$ does exist and is at least $\beta_n$
by definition of the ascending price mechanism. Thus, for the winning bidder no bid greater than $\gamma_i$ survives the 
iterative deletion of strategies that are weakly dominated by a lower bid.

Second, we claim for any buyer~$i$ and any price $q< \beta_i$ there is no sequence of
iterative deletions of strategies that are weakly dominated by a lower bid that leads to the deletion of bid $q$ from the 
feasible strategy space of buyer~$i$.
If not, consider the first time $\tau$ that some buyer~$i$ has a value $q\in [0,\beta_i]$ deleted from $S^\tau_i$.
We may assume that $q$ is deleted because it is was weakly dominated by a lower bid $p<q$.
Now, by assumption, $[0,\beta_j] \subseteq S^{\tau-1}_j$, for each buyer~$j$. Furthermore, by definition, there is some 
buyer~$k$, with $k>i$ that buyer~$i$ wishes to beat at any price below $\beta_i$. In particular, Buyer~$i$ wishes to beat
Buyer~$k$ at price $p$.
But since $k>i$ we have $\beta_k\ge \beta_i$. Recall that $[0,\beta_k] \subseteq S^{\tau-1}_k$. It immediately follows
that there is a set of feasible bids $b_k\in (p,q)$ and $b_j=0$, for all $j\notin \{i,k\}$ such that Buyer~$i$ strictly prefers
to win against these bids. Specifically, the bid $q$ is not weakly dominated by
the bid $p$, a contradiction. 
\end{proof}

It follows that the dropout bids form the {\em focal} subgame perfect equilibrium for both first-price and second-price auctions
with interdependent valuations.

We are now almost ready to be able to find equilibria in the sequential auction experiments we will conduct. This, in turn, will allow us to 
present a sequential auction with non-monotonic prices. Before doing so, one final factor remains to be discussed 
regarding the transition from equilibria in auctions with interdependent valuations to equilibria in sequential auctions.

\subsection{Equilibria in Sequential Auctions}\label{sec:equilibria-sequential}
\subsubsection{Tie-Breaking Rules (and Data Structures!)}\label{sec:tie-breaking-1}
As stated, the dropout bid of each buyer is uniquely defined. However, our description of the ascending auction 
may leave some flexibility in the choice of winner. Specifically, it may be the case that simultaneously
more than one buyer wishes to drop out of the auction. If this happens at the end of the ascending price procedure then
any of these buyers could be selected as the winner. An example of this is shown in Figure~\ref{fig:tie-breaking}.
 \begin{figure}[!h]
\centering
 \begin{tikzpicture}[scale=0.33]
 \node[rectangle,draw](B1) at (0,0) {Buyer 1};
\node[rectangle,draw, scale=1](B2) at (-6,-6) {Buyer 2};
\node[rectangle,draw, scale=1](B3) at (6,-6) {Buyer 3};

\draw [->,  thick] (B1) -- (-2.5,-2.5) node[midway, left, scale = .66]{$15\ \ $};
\draw [->, thick] (B2) -- ( -3.5 ,-3.5) node[midway, left, scale = .66]{$34\ $};
\draw [->, thick] (B1) -- (2.5, -2.5) node[midway,right, scale = .66]{$\ \ 15$};
\draw [->,  thick] (B2) -- (-.5,-6) node[midway, below, scale = .66]{$13$};
\draw [->, thick] (B3) -- ( .5 ,-6) node[midway,below, scale = .66]{$0$};
\draw [->, thick] (B3) -- (3.5,-3.5) node[midway,right, scale = .66]{$\ 126$};
 \end{tikzpicture}
\caption{{\sc Tie-Breaking.} {\em An example that requires tie-breaking to determine the winner. The drop-out bid 
vector is $(15, 15, 15)$ but there are two possible winners, that is, either $(15, 15^+, 15)$  or $(15, 15, 15^+)$.}}\label{fig:tie-breaking}
\end{figure}
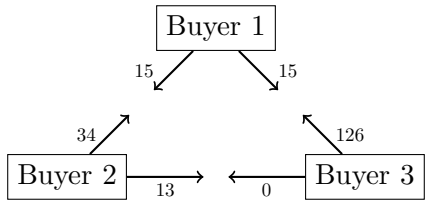

This observation implies that to fully define the ascending auction procedure we must incorporate a tie-breaking rule
to order the buyers when more than one wish to drop out simultaneously. In an auction with interdependent valuations
the tie-breaking rule only affects the choice of winner, but otherwise has no structural significance. 
However, in a sequential auction the choice of tie-breaking rule may have much more significant consequences.
Specifically, because each node in the game tree corresponds to an auction with interdependent valuations, the choice of
winner at one node may effect the valuations at nodes higher in the tree. In particular, the equilibrium path may vary 
with different tie-breaking rules, leading to different prices, winners, and utilities.

As we will show in Section~\ref{sec:tie-breaking-2} there are a massive number of tie-breaking rules, even in
small sequential auctions. 
We emphasize, however, that our main result holds regardless of the
tie-breaking rule. That is, for {\bf any} tie-breaking rule there is a sequential auction
on which it induces a non-monotonic price trajectory. This we will also show in Section~\ref{sec:general} after explaining 
mathematically how to classify every tie-breaking rule in terms of labelled, directed acyclic graphs.
First, though, we will show that non-monotonic pricing occurs on the equilibrium path for perhaps the
three most natural choices of tie-breaking rule, namely {\tt preferential-ordering}, {\tt first-in-first-out} and
{\tt last-in-first-out}. Interestingly these rules correspond to the fundamental data structures 
of priority queues, queues, and stacks used in computer science.

\subsubsection{Tie-Breaking Rules and Data Structures}\label{sec:data-structures}

\noindent\textbf{Preferential Ordering (Priority Queue): }In {\tt preferential-ordering} each buyer is given a distinct rank. In the case of a tie the buyer with the worst
rank is eliminated. Without loss of generality, we may assume that the ranks corresponding to 
a lexicographic ordering of the buyers. That is, the rank of a buyer is its index label and given a tie
amongst all the buyers that wish to dropout of the auction we remove the buyer with the highest index.
The preferential ordering tie-breaking rule corresponds to the data structure known as a {\em priority queue}. \\

\noindent\textbf{First-In-First-Out (Queue): }The {\tt  first-in-first-out} tie-breaking rule corresponds to the data structure known as a {\em queue}. 
The queue consists of those buyers in the auction that wish to dropout. Amongst these, the
buyer at the front of the queue is removed. If multiple buyers request to be added to the queue
simultaneously, they will be added lexicographically. Note though that this is different from preferential ordering
as the entire queue will not, in general, be ordered lexicographically. For example, when at a fixed price $p$ we 
remove the buyer~$i$ at the front of the queue this may cause new buyers to wish to dropout at price $p$ (i.e. those
buyers who only wanted to beat buyer~$i$). These new buyers will be placed behind the other buyers already in the queue. \\

\noindent\textbf{Last-In-First-Out (Stack): }The {\tt last-in-first-out} tie-breaking rule corresponds to the data structure known as a {\em stack}. 
Again the stack consists of those buyers in the auction that wish to dropout. Amongst these, the
buyer at the top of the stack (i.e. the back of the queue) is removed. If multiple buyers request to be added to the 
stack simultaneously, they will be added lexicographically. At first glance, this {\tt last-in-first-out} rule 
appears more unusual than the previous two, but it still has a natural interpretation in terms of an auction.
Namely, it corresponds to settings where the buyer whose situation has changed most recently reacts the 
quickest. \\

In order to understand these tie-breaking rules it is useful to see how they apply on an example.
In Figure~\ref{fig:example-tie-breaking} the dropout vector is 
$(\beta_1,\beta_2,\beta_3, \beta_4, \beta_5) =(40,40,40,40,40)$, but the three
tie-breaking rules will select three different winners.

 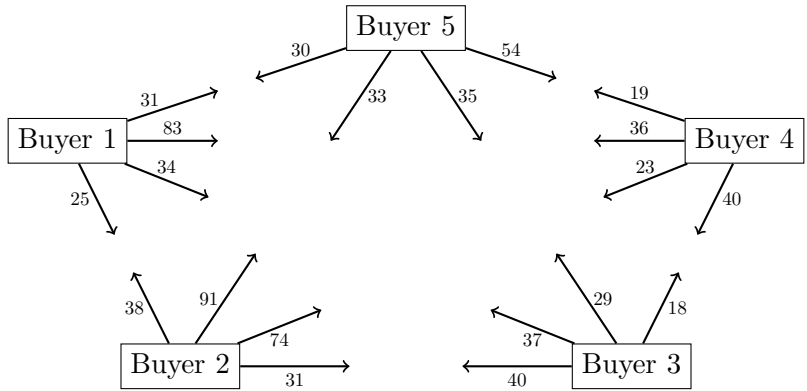
\begin{figure}[!h]
\centering
 \begin{tikzpicture}[scale=0.5]
 \node[rectangle,draw](B1) at (-9,0) {Buyer 1};
\node[rectangle,draw, scale=1](B2) at (-6,-6) {Buyer 2};
\node[rectangle,draw, scale=1](B3) at (6,-6) {Buyer 3};
\node[rectangle,draw, scale=1](B4) at (9,0) {Buyer 4};
 \node[rectangle,draw](B5) at (0,3) {Buyer 5};

\draw [->,  thick] (B1) -- (-7.75, -2.5) node[midway, left, scale = .66]{$25$};
\draw [->,  thick] (B1) -- (-5,0) node[midway, above, scale = .66]{$83$};
\draw [->, thick] (B1) -- (-5.25, -1.5) node[midway,above, scale = .66]{$34$};
\draw [->, thick] (B1) -- (-5, 1.33) node[near start,above, scale = .66]{$31$};

\draw [->, thick] (B2) -- ( -7.25 ,-3.5) node[midway,left, scale = .66]{$38$};
\draw [->,  thick] (B2) -- (-1.5,-6) node[midway, below, scale = .66]{$31$};
\draw [->,  thick] (B2) -- (-2.25,-4.5) node[midway, below, scale = .66]{$74$};
\draw [->, thick] (B2) -- (-4, -3) node[midway,left, scale = .66]{$91$};

\draw [->, thick] (B3) -- (1.5 ,-6) node[midway,below, scale = .66]{$40$};
\draw [->, thick] (B3) -- (2.25,-4.5) node[midway, below, scale = .66]{$37$};
\draw [->, thick] (B3) -- (7.25,-3.5) node[midway, right, scale = .66]{$18$};
\draw [->, thick] (B3) -- (4, -3) node[midway,right, scale = .66]{$29$};

\draw [->, thick] (B4) -- (7.75,-2.5) node[midway, right, scale = .66]{$40$};
\draw [->, thick] (B4) -- (5,0) node[midway, above, scale = .66]{$36$};
\draw [->, thick] (B4) -- (5.25,-1.5) node[midway,above, scale = .66]{$23$};
\draw [->, thick] (B4) -- (5, 1.33) node[midway,above, scale = .66]{$19$};

\draw [->, thick] (B5) -- ( -4 ,1.66) node[midway,above, scale = .66]{$30$};
\draw [->, thick] (B5) -- (-2,0) node[midway, right, scale = .66]{$33$};
\draw [->, thick] (B5) -- (2,0) node[midway, right, scale = .66]{$35$};
\draw [->, thick] (B5) -- (4, 1.66) node[midway,above, scale = .66]{$54$};

 \end{tikzpicture}
 \caption{An Example to Illustrate the Three Tie-Breaking Rules.} \label{fig:example-tie-breaking}
\end{figure}

On running the ascending price procedure, both Buyer~$3$ and Buyer~$4$ wish to drop out when the price reaches $40$. 
In {\tt preferential-ordering}, our choice set is then $\{3,4\}$ and we remove the highest index buyer, namely Buyer~$4$.
With the removal of Buyer~$4$, neither Buyer~$1$ nor Buyer~$5$ have an incentive to continue bidding so
they both decide to dropout. Thus our choice set is now $\{1, 3,5\}$ and {\tt preferential-ordering} removes Buyer~$5$.
Observe, with the removal of Buyer~$5$, that Buyer~$2$ no longer has an active participant it wishes to beat so the 
choice set is updated to $\{1, 2, 3\}$. The {\tt preferential-ordering} rule now removes the buyers in the order
Buyer~$3$, then Buyer~$2$ and lastly Buyer~$1$. Thus Buyer~$1$ wins under the {\tt preferential-ordering} rule.

Now consider {\tt first-in-first-out}. To allow for a consistent comparison between the three methods, we assume that when 
multiple buyers are simultaneously
added to the queue they are added in decreasing lexicographical order. Thus our initial queue is $4:3$ and {\tt first-in-first-out}
removes Buyer~$4$ from the front of the queue. With the removal of Buyer~$4$, neither Buyer~$1$ nor Buyer~$5$ have an 
incentive to continue bidding so they are added to the back of the queue. Thus the queue is now 
$3:5:1$ and {\tt first-in-first-out} removes Buyer~$3$ from the front of the queue. 
It then removes Buyer~$5$ from the front of the queue. With the removal of Buyer~$5$, we again have that Buyer~$2$ now wishes
to dropout. Hence the queue is  
$1:2$ and {\tt first-in-first-out} then removes Buyer~$1$ from the front of the queue and lastly removes Buyer~$2$.
Thus Buyer~$2$ wins under the  {\tt first-in-first-out} rule.

Finally, consider the {\tt last-in-first-out} rule. Again, to allow for a consistent comparison we assume that when multiple buyers are simultaneously
added to the stack they are added in increasing lexicographical order. Thus our initial stack is 
$\begin{smallmatrix} 4\\ 3 \end{smallmatrix}$ and {\tt last-in-first-out}
removes Buyer~$4$ from the top of the stack. Again, Buyer~$1$ and Buyer~$5$ both now wish to drop out so our stack becomes
$\begin{smallmatrix} 5\\1\\ 3 \end{smallmatrix}$. Therefore Buyer~$5$ is next removed from the the top of the stack.
At this point, Buyer~$2$ wishes to dropout so the stack becomes $\begin{smallmatrix} 2\\1\\ 3 \end{smallmatrix}$.
The {\tt last-in-first-out} rule now removes the buyers in the order
Buyer~$2$, then Buyer~$1$ and lastly Buyer~$3$. Thus Buyer~$3$ wins under the  {\tt last-in-first-out} rule.

We have now developed all the tools required to implement our sequential auction experiments. We describe these experiments and their results in
Section~\ref{sec:expts}. Before doing so, we present in Section~\ref{sec:counter-example} one sequential auction obtained via these experiments and verify that it leads 
to a non-monotonic price trajectory with each of the three tie-breaking rules discussed above. We then explain in Section~\ref{sec:general} how to generalize this
conclusion to apply to every tie-breaking rule. 

\section{An Auction with Non-Monotonic Prices}\label{sec:counter-example}
Here we prove that the decreasing price anomaly is {\bf not} guaranteed for sequential auctions with more than two buyers.
Specifically, in Section~\ref{sec:general} we prove the following result:
\begin{theorem}\label{thm:general}
For any tie-breaking rule, there is a sequential auction with non-monotonic prices.
\end{theorem}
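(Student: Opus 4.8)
The plan is to build on the explicit three-buyer, eight-item instance $\mathcal{I}$ constructed in Section~\ref{sec:counter-example}, together with the classification of tie-breaking rules as labelled directed acyclic graphs promised in Section~\ref{sec:general}. First I would make all of the ties in $\mathcal{I}$ explicit: running the ascending-price mechanism at each node of the game tree of $\mathcal{I}$, I would record the (small) set of nodes at which two or more buyers wish to drop out simultaneously, and at each such ``tie node'' the set of buyers among whom the winner must be chosen. Since the game tree is finite and there are only three buyers, each tie node admits at most a constant number of resolutions, so the behaviour of \emph{any} tie-breaking rule on $\mathcal{I}$ is determined by a choice from a finite set; partitioning rules according to this choice leaves only finitely many equivalence classes to consider.

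Second, for each equivalence class I would propagate the node values up the game tree of $\mathcal{I}$ exactly as in the two-buyer recursion, now using the dropout-bid equilibrium of Theorem~\ref{thm:first-multi} (and Theorem~\ref{thm:second-multi} in the second-price case) at every node, and read off the resulting equilibrium price trajectory. The instance $\mathcal{I}$ is designed so that for a large family of these resolutions the trajectory is already non-monotonic; those classes are disposed of immediately, and in particular this recovers the three natural rules already treated.

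Third, for the remaining classes I would exploit the fact that relabelling the buyers of $\mathcal{I}$ again produces a valid sequential auction, and that on a relabelled instance a fixed tie-breaking rule is \emph{forced} to resolve a prescribed three-way tie in favour of any desired buyer (a permutation of $\{1,2,3\}$ can send any target buyer to the rule's fixed choice on $\{1,2,3\}$). Combined, where necessary, with small generic perturbations of the marginal valuations that destroy all \emph{other} ties without disturbing the critical one, this lets me realise, for every tie-breaking rule, some resolution of $\mathcal{I}$ (or of a relabelled/perturbed variant, or of a short list of auxiliary instances handled the same way) under which the price trajectory is non-monotonic. Assembling the three steps gives, for every tie-breaking rule, a sequential auction on which it induces a non-monotonic price trajectory, which is exactly Theorem~\ref{thm:general}.

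The main obstacle is the bookkeeping in the last two steps: one must check that the finitely many tie-resolution patterns are genuinely exhausted by relabellings and perturbations of $\mathcal{I}$ (together with any auxiliary instances), and --- the delicate point --- that none of these modifications creates new ties \emph{higher up} the tree which a hostile tie-breaking rule could exploit to restore monotonicity. Controlling the propagated values under the modifications, rather than the single-node analysis, is where the real work lies.
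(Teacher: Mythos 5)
Your proposal is correct and follows essentially the same route as the paper: the authors reuse the three-buyer, eight-item instance, show that only four ``critical'' overbidding graphs affect node valuations so that all $12{,}288$ tie-breaking rules collapse into just ten classes, observe that the equilibrium path is decided solely by the rule's choice on one symmetric overbidding graph (five classes already yield non-monotonic prices), and dispose of the remaining five classes by swapping the labels of Buyer~$2$ and Buyer~$3$. Your extra tools (generic perturbations, auxiliary instances) turn out to be unnecessary, and your worry about new ties appearing higher up the tree is resolved in the paper by the explicit decision-tree analysis showing the relabelled instance presents the rule with the very same critical overbidding graph.
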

In the rest of this section, we show that for all three of the tie-breaking rules discussed (namely, {\tt preferential-ordering}, {\tt first-in-first-out} and 
{\tt last-in-first-out}) there is a sequential auction with with non-monotonic prices.
Specifically, we exhibit a sequential auction with three buyers and eight items that exhibits non-monotonic prices. 

\begin{theorem}\label{thm:non-monotonic}
There is a sequential auction with non-monotonic prices for the {\tt preferential-ordering}, {\tt first-in-first-out} and 
{\tt last-in-first-out} tie-breaking rules. 
\end{theorem}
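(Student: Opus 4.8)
The plan is to exhibit one explicit three-buyer, eight-item instance and verify, by backward induction on the game tree, that its equilibrium price trajectory first rises and then falls, and that this happens identically under each of the three tie-breaking rules.

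First I would fix the instance: assign each buyer $i \in \{1,2,3\}$ a weakly decreasing marginal valuation vector $(v_i(1), \dots, v_i(8))$ — the numbers are those produced by the experiments of Section~\ref{sec:expts} — and record the induced values $V_i(k) = \sum_{\ell \le k} v_i(\ell)$. The sink nodes $(x_1,x_2,x_3)$ with $x_1+x_2+x_3 = 8$ then have value vector $\Pi_i(x_1,x_2,x_3) = V_i(x_i)$, which is the base case of the recursion.

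Second I would propagate values bottom-up. At an internal node $(x_1,x_2,x_3)$, once the value vectors at its three children are known, I form the single-item auction with interdependent valuations in which buyer~$i$'s value for winning the current item is $\Pi_i$ at the child where $x_i$ increases, and buyer~$i$'s value when buyer~$j$ wins is $\Pi_i$ at the child where $x_j$ increases. Running the ascending-price mechanism of Section~\ref{sec:ascending} (whose output is the focal equilibrium, by Theorem~\ref{thm:first-multi}) yields the dropout bids $\beta_1,\beta_2,\beta_3$, the dropout price of the last buyer to leave — i.e. the price at this node — and, after applying the given tie-breaking rule to any simultaneous dropouts, the winner; the winner keeps its winning value and each loser keeps its value for that winner, giving the value vector at $(x_1,x_2,x_3)$. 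Iterating up to the source $(0,0,0)$ determines the full equilibrium, and tracing the equilibrium path down to a sink gives the eight successive prices, which I then check form a non-monotonic (up-then-down) sequence. I would repeat this for {\tt preferential-ordering}, {\tt first-in-first-out}, and {\tt last-in-first-out}; it in fact suffices to observe that any ties occurring on or influencing the equilibrium path are resolved the same way by all three rules on this instance, so the three trajectories coincide.

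The main obstacle is the sheer size of the computation rather than any conceptual difficulty: the tree has $\binom{11}{3} = 165$ nodes, each requiring a run of the ascending-price mechanism and a tie-breaking decision, so an exhaustive hand verification is impractical — this is exactly why the instance was found and checked computationally in Section~\ref{sec:expts}. In the write-up I would therefore present the valuation vectors together with the relevant part of the game tree along the equilibrium path, drawn in the first-price $p^{+}/p$ notation used throughout the paper, so that the reader can directly verify the critical subgames. A secondary subtlety to nail down is that the choices made at tie nodes elsewhere in the tree do not alter the values propagated back to the source in a way that would destroy non-monotonicity — i.e. robustly confirming that all three rules give the same trajectory — which again is handled by the exhaustive backward induction over all $165$ nodes.
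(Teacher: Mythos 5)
Your proposal is correct and follows essentially the same route as the paper: exhibit one explicit three-buyer, eight-item instance, solve every node of the $\binom{11}{3}=165$-node extensive form by backward induction using the ascending-price (dropout-bid) mechanism with the given tie-breaking rule, and read off the equilibrium price trajectory ($15,17,0,\dots,0$ in the paper's instance), checking that the ties encountered do not change the path under any of the three rules. The only content missing relative to the paper is the concrete valuation vectors and the verified game trees themselves, which is exactly the computational bookkeeping you flag.
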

\begin{proof}
Our counter-example to the conjecture is a sequential auction with three buyers and 
eight identical items for sale. We present the first-price version where at equilibrium the buyers bid their 
dropout values in each time period; as discussed, the same example extends to second-price auctions.

The valuations of the three buyers are defined as follows.
Buyer 1 has marginal valuations $\{55,$ $55,$ $55,$ $55,$ $30,$ $20,$ $0,$ $0\}$, 
Buyer 2 has marginal valuations $\{32,$ $20,$ $0,$ $0,$ $0,$ $0,$ $0,$ $0\}$, and 
Buyer 3 has marginal valuations $\{44,$ $44,$ $44,$ $44,$ $0,$ $0,$ $0,$ $0\}$.

Let's now compute the extensive forms of the auction under the three tie-breaking rules.
We begin with the {\tt preferential-ordering} rule. To compute its extensive form, 
observe that Buyer~1 is guaranteed to win at least two items in the auction because Buyer~2 and Buyer~3
together have positive value for six items. Therefore, the feasible set of sink nodes 
in the extensive form representation are shown in Figure~\ref{fig:sink}. 

 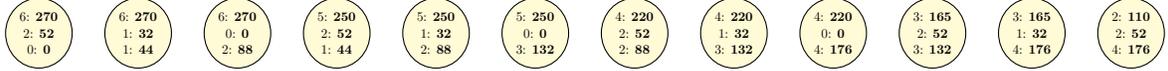
\begin{figure}[!h]
\centering
 \begin{tikzpicture}[scale=0.33]
\node[ellipse,draw, fill=yellow!20, align=center, scale=0.45](H1) at (1,0) { 6: {\bf 270} \\ 2: {\bf 52} \\ 0: {\bf 0}};
\node[ellipse,draw, fill=yellow!20, align=center, scale=0.45](H2) at (5,0) { 6: {\bf 270} \\ 1: {\bf 32} \\ 1: {\bf 44}};
\node[ellipse,draw, fill=yellow!20,align=center, scale=0.45](H3) at (9,0) { 6: {\bf 270} \\ 0: {\bf 0} \\ 2: {\bf 88}};
\node[ellipse,draw, fill=yellow!20,align=center, scale=0.45](H4) at (13,0) { 5: {\bf 250} \\ 2: {\bf 52} \\ 1: {\bf 44}};
\node[ellipse,draw, fill=yellow!20,align=center, scale=0.45](H5) at (17,0) { 5: {\bf 250} \\ 1: {\bf 32} \\ 2: {\bf 88}};
\node[ellipse,draw,fill=yellow!20, align=center, scale=0.45](H6) at (21,0) { 5: {\bf 250} \\ 0: {\bf 0} \\ 3: {\bf 132}};
\node[ellipse,draw, fill=yellow!20,align=center, scale=0.45](H7) at (25,0) { 4: {\bf 220} \\ 2: {\bf 52} \\ 2: {\bf 88}};
\node[ellipse,draw, fill=yellow!20,align=center, scale=0.45](H8) at (29,0) { 4: {\bf 220} \\ 1: {\bf 32} \\ 3: {\bf 132}};
\node[ellipse,draw, fill=yellow!20,align=center, scale=0.45](H9) at (33,0) { 4: {\bf 220} \\ 0: {\bf 0} \\ 4: {\bf 176}};
\node[ellipse,draw, fill=yellow!20,align=center, scale=0.45](H10) at (37,0) { 3: {\bf 165} \\ 2: {\bf 52} \\ 3: {\bf 132}};
\node[ellipse,draw, fill=yellow!20,align=center, scale=0.45](H11) at (41,0) { 3: {\bf 165} \\ 1: {\bf 32} \\ 4: {\bf 176}};
\node[ellipse,draw, fill=yellow!20,align=center, scale=0.45](H12) at (45,0) { 2: {\bf 110} \\ 2: {\bf 52} \\ 4: {\bf 176}};
\end{tikzpicture}
\caption{Sink Nodes of the Extensive Form Game.}\label{fig:sink}
\end{figure}

Given the valuations at the sink nodes we can work our way upwards recursively calculating the values at the 
other nodes in the extensive form representation. For example, consider the node $(x_1,x_2,x_3)=(4,1,2)$.
This node has three children, namely $(5,1,2), (4,2,2)$ and $(4,1,3)$; see Figure~\ref{fig:interdependent}(a).
These induce a three-buyer auction as shown in Figure~\ref{fig:interdependent}(b). This can be solved using the ascending price procedure
to find the dropout bids for each buyer. Thus we obtain that the value for the node $(x_1,x_2,x_3)=(4,1,2)$ is as shown in Figure~\ref{fig:interdependent}(c).
Of course this node is particularly simple as, for the final round of the sequential auction, the corresponding auction 
with interdependent valuations is just a standard auction. That is, when the final item is sold, for any buyer~$i$ the value $v_{i,j}$ is 
independent of the buyer $j\neq i$. 

 \begin{figure}[!h]
\centering
 \begin{tikzpicture}[scale=0.25]
  \node (a) at (-7,0) {(a)};
\node[ellipse,draw, align=center, scale=0.45](B2) at (0,0) { 4: {\bf \ \ -\ \ } \\ 1: {\bf \ \ -\ \ } \\ 2: {\bf \ \ -\ \ }};
\node[ellipse,draw, fill=yellow!20,align=center, scale=0.45](C2) at (-6,-6) { 5: {\bf 250} \\ 1: {\bf 32} \\ 2: {\bf 88}};
\node[ellipse,draw, fill=yellow!20,align=center, scale=0.45](C4) at (0,-6) { 4: {\bf 220} \\ 2: {\bf 52} \\ 2: {\bf 88}};
\node[ellipse,draw, fill=yellow!20,align=center, scale=0.45](C5) at (6,-6) { 4: {\bf 220} \\ 1: {\bf 32} \\ 3: {\bf 132}};

\draw [->] (B2) -- (C2) node[midway,left, scale = .66]{};
\draw [->] (B2) -- (C4) node[midway,left, scale = .66]{};
\draw [->] (B2) -- (C5) node[midway,right, scale = .66]{};
\end{tikzpicture}
\quad \quad 
 \begin{tikzpicture}[scale=0.25]
   \node (b) at (-8,0) {(b)};
 \node[rectangle,draw](B1) at (0,0) {Buyer 1};
\node[rectangle,draw, scale=1](B2) at (-6,-6) {Buyer 2};
\node[rectangle,draw, scale=1](B3) at (6,-6) {Buyer 3};

\draw [->,  thick] (B1) -- (-2.5,-2.5) node[midway, left, scale = .66]{$30\ $};
\draw [->, thick] (B2) -- ( -3.5 ,-3.5) node[left, scale = .66]{$20\ $};
\draw [->, thick] (B1) -- (2.5, -2.5) node[midway,right, scale = .66]{$\ \ 30$};
\draw [->,  thick] (B2) -- (-.5,-6) node[midway, below, scale = .66]{$20\ $};
\draw [->, thick] (B3) -- ( .5 ,-6) node[midway,below, scale = .66]{$44$};
\draw [->, thick] (B3) -- (3.5,-3.5) node[right, scale = .66]{$\ \ 44$};
 \end{tikzpicture}
 \quad \quad 
  \begin{tikzpicture}[scale=0.25]
  \node (c) at (-7,0) {(c)};
\node[ellipse,draw, align=center, scale=0.45](B2) at (0,0) { 4: {\bf 127} \\ 1: {\bf 32} \\ 2: {\bf 48}};

\node[ellipse,draw, fill=yellow!20,align=center, scale=0.45](C2) at (-6,-6) { 5: {\bf 250} \\ 1: {\bf 32} \\ 2: {\bf 88}};
\node[ellipse,draw, fill=yellow!20,align=center, scale=0.45](C4) at (0,-6) { 4: {\bf 220} \\ 2: {\bf 52} \\ 2: {\bf 88}};
\node[ellipse,draw, fill=yellow!20,align=center, scale=0.45](C5) at (6,-6) { 4: {\bf 220} \\ 1: {\bf 32} \\ 3: {\bf 132}};

\draw [->, dotted] (B2) -- (C2) node[midway,left, scale = .66]{$30\ $};
\draw [->, dotted] (B2) -- (C4) node[midway,left, scale = .66]{$20$};
\draw [->, thick] (B2) -- (C5) node[midway,right, scale = .66]{$30+$};
\end{tikzpicture}
\caption{Solving a Subgame above the Sinks.}\label{fig:interdependent}
\end{figure}
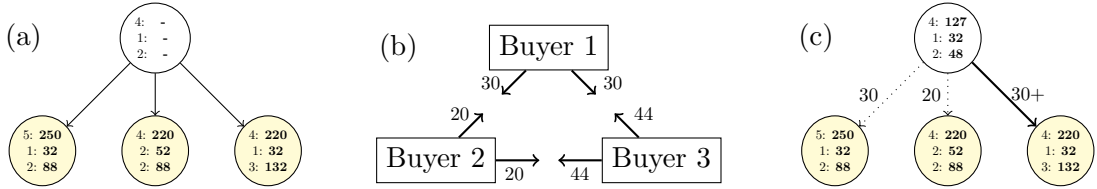

Nodes higher up the game tree correspond to more complex auctions with interdependent valuations.
For example, the case of the source node $(x_1,x_2,x_3)=(0,0,0)$ is shown in Figure~\ref{fig:interdependent2}.
In this case, on applying the ascending price procedure, Buyer~1 is the first to dropout at price $15$. At this point, both Buyer~2 and Buyer~3 no longer
have a competitor that they wish to beat at this price, so they both want to dropout. With the {\tt preferential-ordering} tie-breaking rule,
Buyer~$2$ wins the item.

 \begin{figure}[!h]
\centering
 \begin{tikzpicture}[scale=0.25]

\node[ellipse,draw, align=center, scale=0.45](R1) at (0,0) { 0: {\bf \ \ -\ \ } \\ 0: {\bf \ \ -\ \ } \\ 0: {\bf \ \ -\ \ }};

\node[ellipse,draw, align=center, scale=0.45](A1) at (-6,-6) { 1: {\bf 125} \\ 0: {\bf 1} \\ 0: {\bf 66}};
\node[ellipse,draw, align=center, scale=0.45](A2) at (0,-6) { 0: {\bf 110} \\ 1: {\bf 35} \\ 0: {\bf 176}};
\node[ellipse,draw, align=center, scale=0.45](A3) at (6,-6) { 0: {\bf 110} \\ 0: {\bf 22} \\ 1: {\bf 176}};

\draw [->, dotted] (R1) -- (A1) node[midway,left, scale = .66]{};
\draw [->, thick] (R1) -- (A2) node[midway,left, scale = .66]{};
\draw [->, dotted] (R1) -- (A3) node[midway,right, scale = .66]{};
\end{tikzpicture}
\quad \quad 
 \begin{tikzpicture}[scale=0.25]
 \node[rectangle,draw](B1) at (0,0) {Buyer 1};

\node[rectangle,draw, scale=1](B2) at (-6,-6) {Buyer 2};
\node[rectangle,draw, scale=1](B3) at (6,-6) {Buyer 3};

\draw [->,  thick] (B1) -- (-2.5,-2.5) node[midway, left, scale = .66]{$15\ $};
\draw [->, thick] (B2) -- ( -3.5 ,-3.5) node[left, scale = .66]{$34\ $};
\draw [->, thick] (B1) -- (2.5, -2.5) node[midway,right, scale = .66]{$\ \ 15$};
\draw [->,  thick] (B2) -- (-.5,-6) node[midway, below, scale = .66]{$13\ $};
\draw [->, thick] (B3) -- ( .5 ,-6) node[midway,below, scale = .66]{$0$};
\draw [->, thick] (B3) -- (3.5,-3.5) node[right, scale = .66]{$\ \ 110$};
 \end{tikzpicture}
 \quad \quad 
 \begin{tikzpicture}[scale=0.25]
\node[ellipse,draw, align=center, scale=0.45](R1) at (0,0) { 0: {\bf 110} \\ 0: {\bf 22} \\ 0: {\bf 176}};

\node[ellipse,draw, align=center, scale=0.45](A1) at (-6,-6) { 1: {\bf 125} \\ 0: {\bf 1} \\ 0: {\bf 66}};
\node[ellipse,draw, align=center, scale=0.45](A2) at (0,-6) { 0: {\bf 110} \\ 1: {\bf 35} \\ 0: {\bf 176}};
\node[ellipse,draw, align=center, scale=0.45](A3) at (6,-6) { 0: {\bf 110} \\ 0: {\bf 22} \\ 1: {\bf 176}};

\draw [->, dotted] (R1) -- (A1) node[midway,left, scale = .66]{$15\ $};
\draw [->, thick] (R1) -- (A2) node[midway,left, scale = .66]{$15+\ $};
\draw [->, dotted] (R1) -- (A3) node[midway,right, scale = .66]{$\ 15$};
\end{tikzpicture}
\caption{Solving the Subgame at the Root.}\label{fig:interdependent2}
\end{figure}
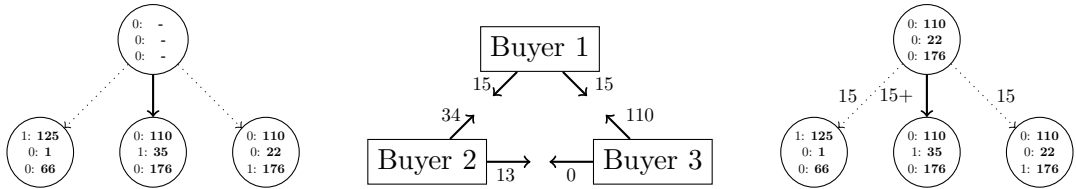

Using similar arguments at each node verifies the concise extensive form representation under the {\tt preferential-ordering} tie-breaking rule
shown Figure~\ref{fig:3-buyers}. In this figure, the white nodes represent subgames where the 
sequential auction still has three active buyers; the pink nodes represent subgames with at most two active 
buyers; the yellow nodes are the sink nodes. Again, the equilibrium path with non-monotonic prices
is shown in bold. Now consider this equilibrium path.
Observe that Buyer~$2$ wins the first two items, 
Buyer~$3$ wins the next four items and Buyer $1$ wins the final two items.
The resultant price trajectory is $\{15, 17, 0,0,0,0,0,0\}$.
That is, the price rises and then falls to zero -- a non-monotonic price trajectory.

Exactly the same example works with the other two tie-breaking rules. The extensive form representation 
with the {\tt first-in-first-out} rule is shown in Figure~\ref{fig:first-in-rule} in Appendix~B; the extensive form representation 
with the {\tt last-in-first-out} rule is shown in Figure~\ref{fig:last-in-rule} in Appendix~B.

\begin{figure}[!h]
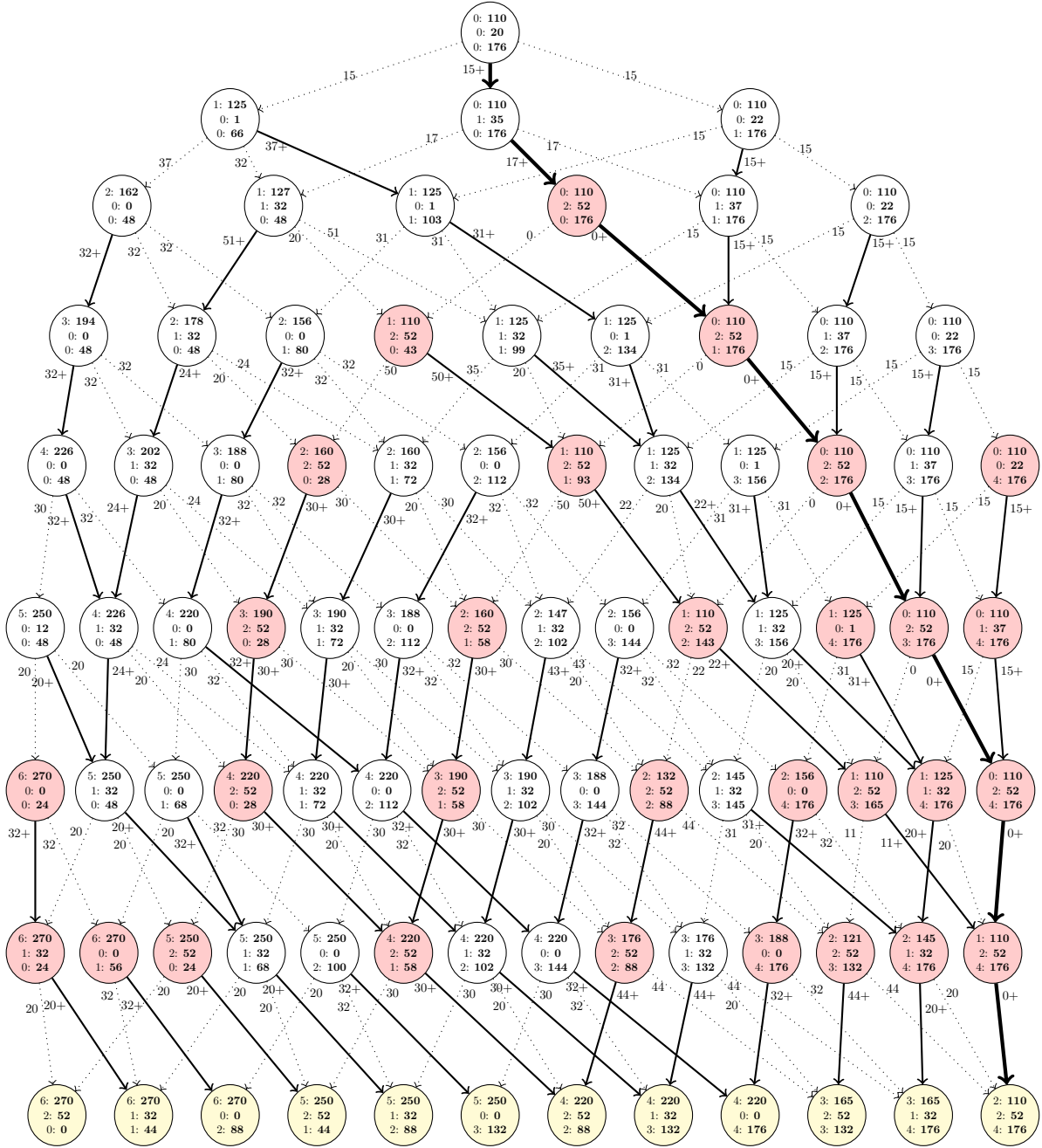

\centering

\caption{Non-Monotonic Prices with the {\tt preferential-ordering} Rule.}\label{fig:3-buyers}
\end{figure}

Notice the node values under {\tt preferential-ordering} and {\tt first-in-first-out} are exactly the same. 
This is despite the fact that these two rules do produce different winners at some nodes, for example the node $(3,0,2)$.
In contrast, the {\tt last-in-first-out} rule gives an extensive for where some nodes have different 
valuations than those produced by the other two rules. For example, for the node $(2,0,0)$ and its subgame the equilibrium 
paths and their prices differ in Figure~\ref{fig:first-in-rule} and Figure~\ref{fig:last-in-rule}. However, for all three rules the
equilibrium path and price trajectory {\em for the whole game} is exactly the same.
We remark that these observations will play a role when we prove that, for any tie-breaking rule, there
is a sequential auction with non-monotonic prices.
\end{proof}
\vspace{-6pt}

Again, we emphasize that there is nothing inherently perverse about this example.
The form of the valuation functions, namely decreasing marginal valuations, is standard.
As explained, the equilibrium concept studied is the appropriate one for sequential auctions.
Finally, the non-monotonic price trajectory is not the artifact of an aberrant tie-breaking rule;
we will now prove that non-monotonic prices are exhibited 
under any tie-breaking rule.

\section{Non-Monotonic Prices under General Tie-Breaking Rules}\label{sec:general}
Next we prove that for any tie-breaking rule there is a sequential auction on which it
produces a non-monotonic price trajectory. To do this, we must first formally define the set of
all tie-breaking rules.

\subsection{Classifying the set of Tie-Breaking Rules}\label{sec:tie-breaking-2}
Our definition of the set of tie-breaking rules will utilize the concept of an {\em overbidding graph}, introduced by \citet{PST12}.
For any price $p$ and any set of bidders $S$, the overbidding graph $G(S,p)$ contains a labelled vertex for each buyer in $S$
and an arc $(i,j)$ if and only if $v_{i,i}-p>v_{i,j}$. 
For example, recall the auction with interdependent valuations seen in Figure~\ref{fig:example-tie-breaking}.
This is reproduced in Figure~\ref{fig:DAG} along with its overbidding graph $G(\{1,2,3,4,5\}, 40)$ in Figure~\ref{fig:DAG}.

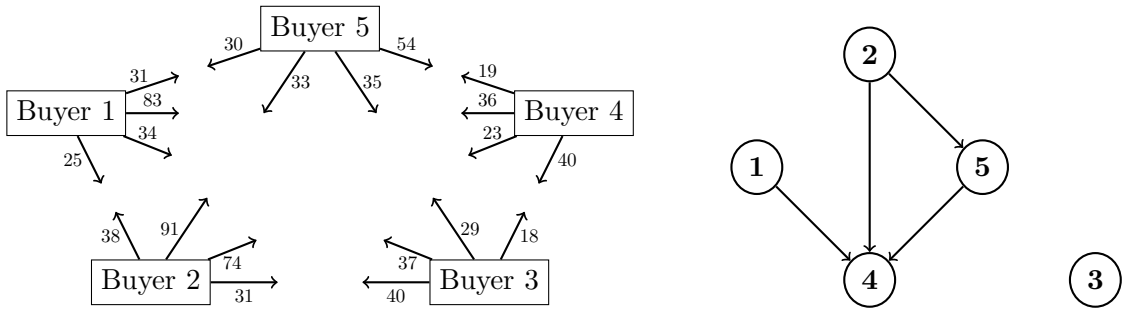
\begin{figure}[!h]
\centering
 \begin{tikzpicture}[scale=0.375]
\node[rectangle,draw](B1) at (-9,0) {Buyer 1};
\node[rectangle,draw, scale=1](B2) at (-6,-6) {Buyer 2};
\node[rectangle,draw, scale=1](B3) at (6,-6) {Buyer 3};
\node[rectangle,draw, scale=1](B4) at (9,0) {Buyer 4};
 \node[rectangle,draw](B5) at (0,3) {Buyer 5};

\draw [->,  thick] (B1) -- (-7.75, -2.5) node[midway, left, scale = .66]{$25$};
\draw [->,  thick] (B1) -- (-5,0) node[midway, above, scale = .66]{$83$};
\draw [->, thick] (B1) -- (-5.25, -1.5) node[midway,above, scale = .66]{$34$};
\draw [->, thick] (B1) -- (-5, 1.33) node[near start,above, scale = .66]{$31$};

\draw [->, thick] (B2) -- ( -7.25 ,-3.5) node[midway,left, scale = .66]{$38$};
\draw [->,  thick] (B2) -- (-1.5,-6) node[midway, below, scale = .66]{$31$};
\draw [->,  thick] (B2) -- (-2.25,-4.5) node[midway, below, scale = .66]{$74$};
\draw [->, thick] (B2) -- (-4, -3) node[midway,left, scale = .66]{$91$};

\draw [->, thick] (B3) -- (1.5 ,-6) node[midway,below, scale = .66]{$40$};
\draw [->, thick] (B3) -- (2.25,-4.5) node[midway, below, scale = .66]{$37$};
\draw [->, thick] (B3) -- (7.25,-3.5) node[midway, right, scale = .66]{$18$};
\draw [->, thick] (B3) -- (4, -3) node[midway,right, scale = .66]{$29$};

\draw [->, thick] (B4) -- (7.75,-2.5) node[midway, right, scale = .66]{$40$};
\draw [->, thick] (B4) -- (5,0) node[midway, above, scale = .66]{$36$};
\draw [->, thick] (B4) -- (5.25,-1.5) node[midway,above, scale = .66]{$23$};
\draw [->, thick] (B4) -- (5, 1.33) node[midway,above, scale = .66]{$19$};

\draw [->, thick] (B5) -- ( -4 ,1.66) node[midway,above, scale = .66]{$30$};
\draw [->, thick] (B5) -- (-2,0) node[midway, right, scale = .66]{$33$};
\draw [->, thick] (B5) -- (2,0) node[midway, right, scale = .66]{$35$};
\draw [->, thick] (B5) -- (4, 1.66) node[midway,above, scale = .66]{$54$};
\end{tikzpicture}
\quad \quad \quad
 \begin{tikzpicture}[scale=0.25]
\node[ellipse,thick,draw, align=center](R1) at (-6,6) {{\bf 1}};
\node[ellipse,thick,draw, align=center](R2) at (0,12) {{\bf 2}};
\node[ellipse,thick,draw, align=center](R3) at (12,0) {{\bf 3}};
\node[ellipse,thick,draw, align=center](R4) at (0,0) {{\bf 4}};
\node[ellipse,thick,draw, align=center](R5) at (6,6) {{\bf 5}};

\draw [->, thick] (R1) -- (R4);
\draw [->, thick] (R2) -- (R4);
\draw [->, thick] (R2) -- (R5);
\draw [->, thick] (R5) -- (R4);
\end{tikzpicture}
 \caption{The Overbidding Graph $G(\{1,2,3,4,5\}, 40)$.} \label{fig:DAG}
\end{figure}

But what does the overbidding graph have to do with tie-breaking rules?
First, recall that the drop-out bid $\beta_i$ is unique for any buyer~$i$, regardless of the 
tie-breaking rule. Consequently, whilst the tie-breaking rule will also be used to order buyers that are eliminated at prices 
below the final price $p^*$, such choices are irrelevant with regards to the final winner. Thus, the only relevant factor is
how a decision rule selects a winner from amongst those buyers $S^*$ whose drop-out bids are $p^*$. Second, recall
that at the final price $p^*$ the remaining buyers are eliminated one-by-one until there is a single winner. However, a buyer {\em cannot} 
be eliminated if there remains another buyer still in the auction that it wishes to beat at price $p^*$. That is, buyer~$i$ must be eliminated
after buyer~$j$ if there is an arc $(i,j)$ in the overbidding graph. Thus, the order of eliminations given by the tie-breaking rule
must be consistent with the overbidding graph. 
In particular, the winner can only be selected from amongst the {\em source vertices}\footnote{A {\em source} is a vertex $v$ with in-degree zero; that is, there
no arcs pointing into $v$.} in the overbidding graph $G(S^*, p^*)$. For example, in Figure~\ref{fig:DAG} the source vertices are $\{1, 2, 3\}$.
Note that this explains why the tie-breaking rules 
{\tt preferential-ordering}, {\tt first-in-first-out} and {\tt last-in-first-out} chose Buyer $1$, Buyer $2$ and Buyer $3$
as winners but none of them selected Buyer $4$ or Buyer $5$.
Observe that the overbidding graph $G(S^*, p^*)$ is {\em acyclic}; if it contained a directed cycle then the price in the ascending
auction would be forced to rise further. Because every directed acyclic graph contains at least one source vertex, any tie-breaking rule
does have at least one choice for winner.

Thus a tie-breaking rule is simply a function $\tau:H\rightarrow \sigma(H)$. 
Here the domain of the function is the set of labelled, directed acyclic graphs
and $\sigma(H)$ is the set of source nodes in $H$. Consequently, two tie-breaking rules
are equivalent if they correspond to the same function $\tau$.
We are now ready to present our main result.

\subsection{Non-Monotonic Prices for Any Tie-Breaking Rule}\label{sec:main}

\begin{theorem}
For any tie-breaking rule $\tau$, there is a sequential auction on which it produces non-monotonic prices.
\end{theorem}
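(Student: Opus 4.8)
The plan is to build on the three-buyer, eight-item instance $\mathcal{A}$ of Theorem~\ref{thm:non-monotonic}, the classification of tie-breaking rules as functions from labelled acyclic overbidding graphs to their source vertices, and the freedom to relabel buyers. First I would enumerate every node of the game tree of $\mathcal{A}$ at which a tie can occur under \emph{some} rule, i.e.\ every node whose induced interdependent-valuation auction terminates at a price $p^*$ for which the overbidding graph $G(S^*,p^*)$ on the set $S^*$ of buyers with dropout bid $p^*$ has at least two sources. Because there are only three buyers, each such graph is a labelled DAG on at most three vertices, so only finitely many distinct such graphs occur across the whole tree, and a tie-breaking rule $\tau$ influences the play of $\mathcal{A}$ solely through its values on this finite family.

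The heart of the argument is a robustness claim: resolving the ties at the \emph{internal} (non-root) nodes in any admissible way leaves the qualitative shape of the root equilibrium path intact---in particular the first item always sells at a strictly positive price while some later item always sells at price $0$. The remark following Theorem~\ref{thm:non-monotonic} is the seed of this: the rules {\tt preferential-ordering}, {\tt first-in-first-out} and {\tt last-in-first-out} already pick genuinely different winners at internal nodes (for instance at $(3,0,2)$ and in the subgame below $(2,0,0)$), yet all three yield the identical non-monotonic trajectory $\{15,17,0,0,0,0,0,0\}$ for the whole game. Where two tied buyers have equal continuation value to the third buyer, the node value does not depend on the choice at all; the finitely many remaining internal ties I would dispatch by a direct computation, checking that the change a tie-resolution induces in a node value is too small to reverse the (strict) comparison selecting the optimal child at any ancestor.

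Granting this, only the tie at the root of $\mathcal{A}$---whose overbidding graph is a two-element antichain on the two buyers other than Buyer~$1$---can affect whether the overall trajectory is non-monotonic, and exactly one of the two admissible winners there produces the non-monotonic path. Given an arbitrary tie-breaking rule $\tau$, I would then relabel the three buyers of $\mathcal{A}$ by a permutation $\pi$ chosen so that, after relabeling, the source-selecting function $\tau$ is forced to pick the ``good'' buyer at the root's antichain; since there are only $3!$ relabelings and the relevant graph is a single two-element antichain (and relabeling $\mathcal{A}$ only permutes an isomorphic copy of the same tie structure), such a $\pi$ exists. The relabeled instance $\pi(\mathcal{A})$ is then a sequential auction on which $\tau$ produces non-monotonic prices, which is what we want.

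\textbf{Main obstacle.} The delicate step is the robustness claim of the second paragraph: controlling how a tie broken deep in the tree propagates upward. A priori, a different winner at an internal tie changes that node's value vector, which could flip which child is optimal at an ancestor and cascade all the way to the root, possibly restoring monotonicity. The crux is to rule this out for $\mathcal{A}$---either by showing the competing value gaps at every ancestor comparison are strict and exceed the size of the perturbation, or by isolating the affected subtrees from the root's equilibrium path. If no clean separation argument is available, the fallback is a finite, in principle machine-checkable enumeration of all tie-resolution patterns of $\mathcal{A}$, verifying each induces a non-monotonic root trajectory; this would in fact establish the stronger statement that the single instance $\mathcal{A}$ works for every $\tau$, with no relabeling needed.
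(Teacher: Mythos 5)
Your overall strategy---reuse the instance $\mathcal{A}$ of Theorem~\ref{thm:non-monotonic}, reduce the influence of $\tau$ to finitely many labelled overbidding graphs, isolate a single decisive choice, and relabel the buyers to force $\tau$ to make the ``good'' choice---is exactly the paper's argument. But the axis along which you split the ties is wrong. A tie-breaking rule is a function of the labelled overbidding graph, not of the node at which the tie occurs, and the root's graph (the two-vertex antichain on $\{2,3\}$, the paper's graph $A$ in Figure~\ref{fig:four}) recurs at eight \emph{internal} nodes, including every node on both candidate equilibrium paths ($(0,1,0)$, $(0,0,1)$, $(0,1,1)$, $(0,0,2)$, etc.). So ``the tie at the root'' and ``the ties at the internal nodes'' are not independent degrees of freedom: the single value $\tau(A)$ simultaneously fixes the winner at all of them, and a node-by-node robustness check that allowed these internal ties to be resolved arbitrarily would be both checking impossible configurations and begging the question, since the internal graph-$A$ ties \emph{are} the root tie. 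The correct decomposition---the one the paper's decision tree in Figure~\ref{fig:decision} carries out---is by graph: $\tau(A)$ is the decisive parameter, while the values of $\tau$ on the three other critical graphs $B$, $C$, $D$ (arising at $(4,0,0)$, $(3,0,0)$, $(2,0,0)$) only perturb valuations off the equilibrium path. Restated as ``the choices on $B$, $C$, $D$ never move the equilibrium path,'' your robustness claim is true and is precisely what the finite verification must establish.

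Your fallback, however, is false, and this matters because it suggests you have not registered that $\mathcal{A}$ alone does not suffice. If $\tau(A)=3$ the equilibrium path is $(0,0,0)\to(0,0,1)\to(0,0,2)\to\cdots$ with price trajectory $15,15,15,15,15,15,0,0$, which is weakly decreasing: this is the paper's Figure~\ref{fig:monotonic}, and five of the ten classes of tie-breaking rules behave this way on $\mathcal{A}$. In particular ``the first item sells at a strictly positive price and some later item sells at price $0$'' does not imply non-monotonicity, and an exhaustive enumeration of tie-resolution patterns would \emph{not} show that every $\tau$ fails on $\mathcal{A}$. The relabeling of Buyers~$2$ and~$3$ is therefore not an optional convenience but the essential last step---exactly as in your main line of argument, which is correct on this point and matches the paper's proof.
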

\begin{proof}
We consider exactly the same example as in Theorem~\ref{thm:non-monotonic}.
That is, we have three buyers and eight items where
Buyer~$1$ has marginal valuations $\{55,55,55,55,30,20,0,0\}$, 
Buyer~$2$ has marginal valuations $\{32,20,0,0,0,0,0,0\}$, and 
Buyer~$3$ has marginal valuations $\{44,44,44,44,0,0,0,0\}$.

First let's calculate how many tie-breaking rules there are for this auction. To count this we must consider
all directed acyclic graphs with labels in $\{1,2,3\}$. Note that we must have at least two
buyers with drop-out values equal to the final price $p^*$ otherwise the auction would
have terminated earlier. Thus it suffices to consider directed acyclic graphs with either two or three vertices.
There are $8$ such topologies that produce $34$ labelled
directed acyclic graphs and $12,288$ tie-breaking rules! This is all illustrated in Table~\ref{tab:DAGs}.

\begin{table}[!h]
\centering
\begin{tabular}{|>{\centering}m{1.8cm}||c|c|c|c|c|c|c|c||>{\centering}m{1.6cm}|>{\centering}m{2cm}|}
 \hline
    Directed Acyclic Graph
 & 
 \begin{tikzpicture}[baseline=(current bounding box.west),scale=1]
    \node[circle,thick,draw](x) at (0,0) {x};
    \node[circle,thick,draw](y) at (0,-1.5) {y};
 \end{tikzpicture}
 &
 \begin{tikzpicture}[baseline=(current bounding box.west),scale=1]
    \node[circle,thick,draw](x) at (0,0) {x};
    \node[circle,thick,draw](y) at (0,-1.5) {y};
    \draw [->,  thick] (x) -- (y) ;
 \end{tikzpicture}
 &
 \begin{tikzpicture}[baseline=(current bounding box.west),scale=1]
    \node[circle,thick,draw](x) at (0,0) {x};
    \node[circle,thick,draw](y) at (0,-1) {y};
    \node[circle,thick,draw](z) at (0,-2) {z};
 \end{tikzpicture}
 &
 \begin{tikzpicture}[baseline=(current bounding box.west),scale=1]
    \node[circle,thick,draw](x) at (0,0) {x};
    \node[circle,thick,draw](y) at (0,-1) {y};
    \node[circle,thick,draw](z) at (0,-2) {z};
    \draw [->,  thick] (x) -- (y) ;
 \end{tikzpicture}
 &
 \begin{tikzpicture}[baseline=(current bounding box.west),scale=1]
    \node[circle,thick,draw](x) at (0,0) {x};
    \node[circle,thick,draw](y) at (0,-1) {y};
    \node[circle,thick,draw](z) at (0,-2) {z};
    \draw [->,  thick] (x) -- (y) ;
    \draw [->,  thick] (y) -- (z) ;
 \end{tikzpicture}
 &
 \begin{tikzpicture}[baseline=(current bounding box.west),scale=1]
    \node[circle,thick,draw](y) at (0,0) {y};
    \node[circle,thick,draw](z) at (0,-2) {z};
    \node[circle,thick,draw](x) at (0,-1) {x};
    \draw [->,  thick] (x) -- (y) ;
    \draw [->,  thick] (x) -- (z) ;
 \end{tikzpicture}
 &
 \begin{tikzpicture}[baseline=(current bounding box.west),scale=1]
    \node[circle,thick,draw](x) at (0,-1) {x};
    \node[circle,thick,draw](y) at (0,0) {y};
    \node[circle,thick,draw](z) at (0,-2) {z};
    \draw [->,  thick] (y) -- (x) ;
    \draw [->,  thick] (z) -- (x) ;
 \end{tikzpicture}
 &
 \begin{tikzpicture}[baseline=(current bounding box.west),scale=1]
    \node[circle,thick,draw](x) at (0.5,1) {x};
    \node[circle,thick,draw](y) at (0,0) {y};
    \node[circle,thick,draw](z) at (0.5,-1) {z};
    \draw [->,  thick] (x) -- (y) ;
    \draw [->,  thick] (y) -- (z) ;
    \draw [->,  thick] (x) -- (z) ;
 \end{tikzpicture}
 &
 Total $\#$ Labelled DAGs
 &
 Total $\#$ Tie-Breaking Rules
 \tabularnewline
 \hline
 $\#$ Labelled Graphs & 3 & 6 & 1 & 6 & 6 & 3 & 3 & 6 & 34 & \multirow{2}{2cm}{$1^{21}\cdot$ $2^{12}\cdot$ $3^1$ $= 12,288$} \tabularnewline
 \cline{1-10}
 $\#$ Sources & 2 & 1 & 3 & 2 & 1 & 1 & 2 & 1 & & \tabularnewline
 \hline
\end{tabular}
\caption{Labelled Directed Acyclic Graphs}\label{tab:DAGs}

\end{table}

Luckily we do not need to examine all these tie-breaking rules separately. It turns out 
that the set of tie-breaking rules can be partitioned into exactly {\bf ten} classes. Specifically, any tie-breaking rule produces 
one of just ten possible (in terms of distinct node valuations) extensive forms for this sequential auction. 
Two of these we have seen before.
The first is the extensive form shown in Figure~\ref{fig:3-buyers} (and also shown in Figure~\ref{fig:first-in-rule}) that is  
produced by both {\tt preferential-ordering} and {\tt first-in-first-out}.
The second is the extensive form shown in Figure~\ref{fig:last-in-rule} produced by {\tt last-in-first-out}.

Let's explain why there are only eight other feasible extensive forms. 
For any tie-breaking rule, as we work up from the sink nodes there are many nodes 
where the tie-breaking rule is required. Given this fact, why doesn't the total number of distinct extensive forms blow-up 
multiplicatively?
As previously alluded to, when we apply a tie-breaking rule there are two possibilities that arise.
In the first possibility, the node valuations are the same regardless of which buyer is selected by the rule.
Indeed this is why {\tt preferential-ordering} and {\tt first-in-first-out} can produce the same
extensive form. For an example, consider the node $(3,0,2)$ where Buyer~$1$ wins with 
{\tt preferential-ordering} but Buyer~$3$ wins with {\tt first-in-first-out}; in either case the
node valuations are identical, namely $(188,0,112)$ as shown in Figures~\ref{fig:3-buyers} and~\ref{fig:first-in-rule}.
For our purpose, such nodes are no importance.

\begin{figure}[h]
\centering
  \begin{tikzpicture}[scale=1]
   \node (A) at (0,0) {(A)};
   \node (B) at (4,0) {(B)};
   \node (C) at (8,0) {(C)};
   \node (D) at (12,0) {(D)};
    \node[circle,thick,draw](A1) at (2,-1) {1};
    \node[circle,thick,draw](A2) at (1,0) {2};
    \node[circle,thick,draw](A3) at (3,0) {3};
    \draw [->, thick] (A2) -- (A1);
    \draw [->, thick] (A3) -- (A1);
    \node[circle,thick,draw](B2) at (5.25,-0.5) {2};
    \node[circle,thick,draw](B3) at (6.75,-0.5) {3};
    \node[circle,thick,draw](C1) at (9,-0.5) {1};
    \node[circle,thick,draw](C2) at (10,-0.5) {2};    
    \node[circle,thick,draw](C3) at (11,-0.5) {3};
    \node[circle,thick,draw](D1) at (13,0) {1};
    \node[circle,thick,draw](D2) at (15,0) {2};
    \node[circle,thick,draw](D3) at (14,-1) {3};
    \draw [->, thick] (D1) -- ( D3) ;
 \end{tikzpicture}
\caption{The Four Critical Overbidding Graphs.}\label{fig:four}
\end{figure}
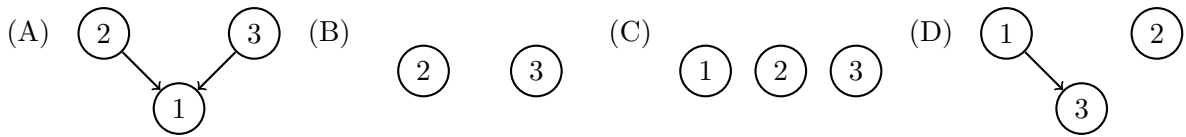

In the second possibility, the node valuations do vary depending upon which buyer is selected by the tie-breaking rule.
It turns out, however, that of the $34$ labelled directed acyclic graphs only $4$ of these overbidding graphs affect the extensive form
node valuations.  These four critical overbidding graphs, which we call $A,B,C$ and $D$, are shown in Figure~\ref{fig:four}.

So why are these these the only four overbidding graphs that matter?
The reader may verify that, working upwards from the sink nodes, the first such nodes where the choice of tie-breaking rule
matters occur at depth~$4$. Specifically, at the three nodes $(4,0,0), (1,0,3)$ and $(0,1,3)$. 
Now the nodes $(1,0,3)$ and $(0,1,3)$ both correspond to the 
overbidding graph $A$ whilst the node $(4,0,0)$ corresponds to the overbidding graph $B$.
For the overbidding graph $A$ the tie-breaking rule must select either the sink vertex~$2$ or the sink vertex~$3$ to win.
Moreover, by definition, it must make the same choice at both $(1,0,3)$ and $(0,1,3)$.
Furthermore, regardless of this choice, as we work up the extensive form the nodes 
$(1,0,2)$, $(0,1,2)$, $(0,0,3)$, $(0,1,1)$, $(0,0,2)$, $(0,1,0)$, $(0,0,1)$ and $(0,0,0)$
also all have the overbidding graph $A$ and, thus, must also have the same winner.

\tikzstyle{decision} = [diamond, draw, fill=blue!20, 
    text width=5em, text badly centered, node distance=3cm, inner sep=0pt]
\tikzstyle{block} = [rectangle, draw, fill=blue!20, 
    text width=5.5em, text centered, rounded corners, minimum height=3em]
\tikzstyle{line} = [draw, -latex']
\tikzstyle{cloudA} = [draw, ellipse,fill=red!20, node distance=3cm,
    minimum height=2em]
    \tikzstyle{cloudB} = [draw, ellipse,fill=green!20, node distance=3cm,
    minimum height=2em]
\tikzstyle{box} = [rectangle, draw, node distance=3cm, text width=5.5em]

    \begin{figure}[!h]
\centering
\begin{tikzpicture}[node distance = 2cm, auto]
    \node [block] (400) {Node (4,0,0) \\ {\tt DAG: B}};
    
    \node [block, xshift=-2.5cm, yshift=.5cm, left of=400, below of =400] (300a) {Node (3,0,0) {\tt DAG: D}};
    \node [block, xshift= 1.5cm, yshift=.5cm, right of=400, below of=400] (300b) {Node (3,0,0) {\tt DAG: C}};
  
    \node [block, yshift=.5cm, left of=300a, below of =300a] (200a) {Node (2,0,0) {\tt DAG: D}};
    \node [block, xshift=-.5cm, yshift=.5cm, right of=300a, below of =300a] (200b) {Node (2,0,0) {\tt DAG: D}};
    \node [block, xshift=-1cm,yshift=.5cm, left of =300b, below of =300b] (200c) {Node (2,0,0) {\tt DAG: C}};
    \node [block, xshift=-.5cm,yshift=.5cm, right of=300b, below of =300b] (200d) {Node (2,0,0) {\tt DAG: D}};         
  
    \node [block, below of =200a] (103a) {Node (1,0,3) \\ {\tt DAG: A}};
    \node [block, below of =200b] (103b) {Node (1,0,3) \\ {\tt DAG: A}};
    \node [block, below of =200c] (103c) {Node (1,0,3) \\ {\tt DAG: A}};
    \node [block, xshift= -1.5cm, below of =200d] (103d) {Node (1,0,3) \\ {\tt DAG: A}};
    \node [block, xshift= 1.5cm, below of =200d] (103e) {Node (1,0,3) \\ {\tt DAG: A}};
  
    \node [cloudA, xshift=-0.75cm, yshift=0.75cm, below of =103a] (D1) {{\tt No}};     
    \node [cloudB, xshift=0.75cm, yshift=0.75cm, below of =103a] (D2) {{\tt Yes}}; 
    \node [cloudA, xshift=-0.75cm, yshift=0.75cm, below of =103b] (D3) {{\tt No}};     
    \node [cloudB, xshift=0.75cm, yshift=0.75cm, below of =103b] (D4) {{\tt Yes}}; 
    \node [cloudA, xshift=-0.75cm, yshift=0.75cm, below of =103c] (D5) {{\tt No}};     
    \node [cloudB, xshift=0.75cm, yshift=0.75cm, below of =103c] (D6) {{\tt Yes}}; 
    \node [cloudA, xshift=-0.75cm, yshift=0.75cm, below of =103d] (D7) {{\tt No}};     
    \node [cloudB, xshift=0.75cm, yshift=0.75cm, below of =103d] (D8) {{\tt Yes}}; 
    \node [cloudA, xshift=-0.75cm, yshift=0.75cm, below of =103e] (D9) {{\tt No}};     
    \node [cloudB, xshift=0.75cm, yshift=0.75cm, below of =103e] (D10) {{\tt Yes}}; 
         
    \path [line] (400) -| node[anchor=east] {2 wins}(300a);
    \path [line] (400) -| node[anchor=west] {3 wins}(300b);
       
    \path [line] (300a) -| node[anchor=east] {1 wins}(200a);
    \path [line] (300a) -| node[anchor=west] {2 wins}(200b);
    \path [line] (300b) -| node[anchor=east,yshift=-.5cm] {1 or 3 wins}(200c);
    \path [line] (300b) -| node[anchor=west] {2 wins}(200d);
           
    \path [line] (200a) -- node[anchor=east] {1 wins}(103a);
    \path [line] (200b) --  node[anchor=east] {2 wins}(103b);
    \path [line] (200c) -- node[anchor=east] {1 or 3 wins}(103c);
    \path [line] (200d) --  node[anchor=east] {1 wins}(103d);
    \path [line] (200d) -- node[anchor=west] {2 wins}(103e);
                  
    \path [line] (103a) --  node[anchor=east] {2 wins}(D1);
    \path [line] (103a) -- node[anchor=west] {3 wins}(D2);
    \path [line] (103b) --  node[anchor=east, near start] {2 wins}(D3);
    \path [line] (103b) -- node[anchor=west, near start] {3 wins}(D4);
    \path [line] (103c) --  node[anchor=east] {2 wins}(D5);
    \path [line] (103c) -- node[anchor=west] {3 wins}(D6);
    \path [line] (103d) --  node[anchor=east, near start] {2 wins}(D7);
    \path [line] (103d) -- node[anchor=west, near start] {3 wins}(D8);
    \path [line] (103e) --  node[anchor=east] {2 wins}(D9);
    \path [line] (103e) -- node[anchor=west] {3 wins}(D10);
              
    \node [box, yshift=1cm, below of = D1] (PO) {{\footnotesize {\tt Preferential Ordering; First-in -first-out}}};                
    \node [box, yshift=1cm, below of = D3] (Li) {{\footnotesize {\tt Last-in -first-out}}};    
    \node [box, yshift=1cm, below of = D6] (Mono) {{\footnotesize {\tt See Figure~\ref{fig:monotonic}}}};     
                  
    \draw [dotted] (D1) -- (PO);  
    \draw [dotted] (D3) -- (Li);  
    \draw [dotted] (D6) -- (Mono);       
\end{tikzpicture}
\caption{Monotonic Prices: Yes or No? {\em A Decision Tree Partitioning the Tie-Breaking Rules into Ten Classes}.}\label{fig:decision}
\end{figure}
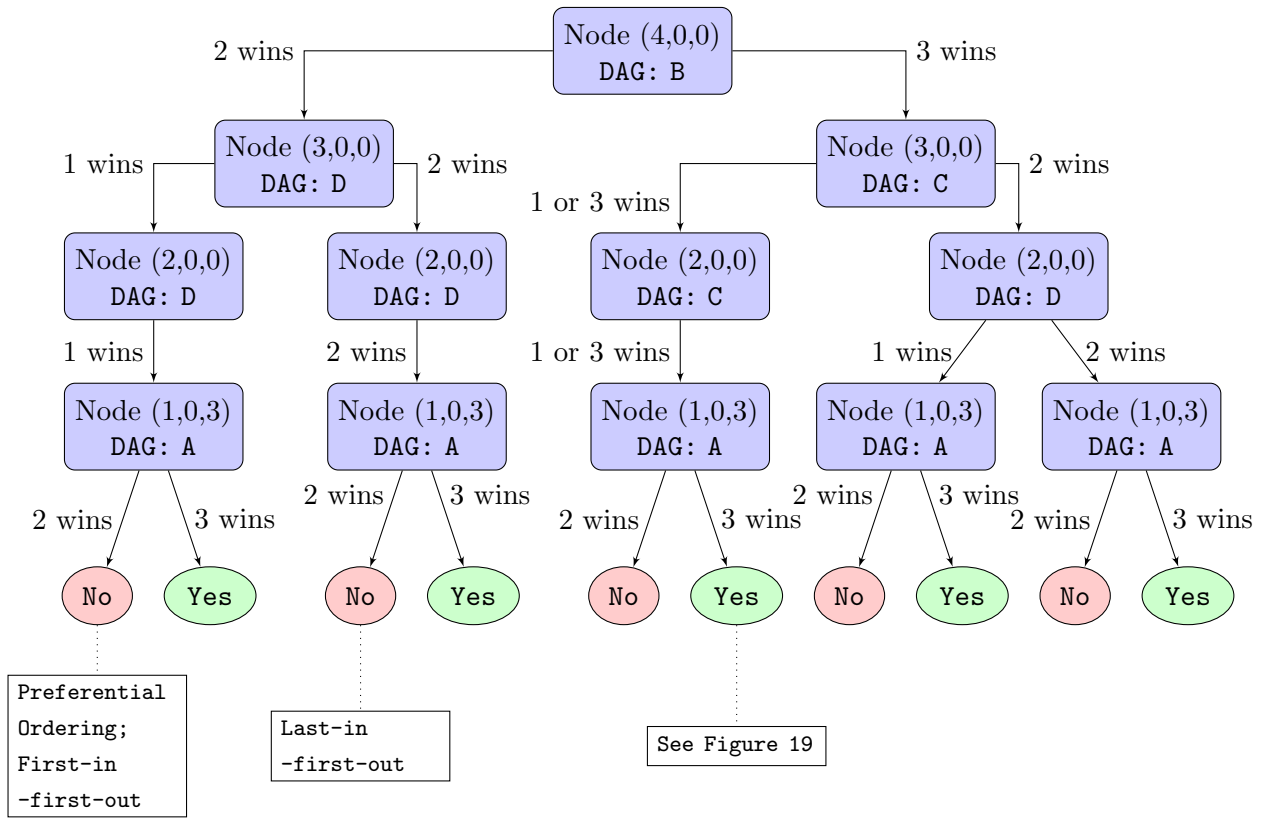

The choice of winner at  $(4,0,0)$ for overbidding graph $B$ is also between Buyer~$2$ and Buyer~$3$, but
in this case, the effect is more subtle. If Buyer~$2$ wins then the overbidding graph $D$ is induced at node $(3,0,0)$, whereas
if Buyer~$3$ wins then the overbidding graph $C$ is induced at $(3,0,0)$. In the former case, the overbidding graph $D$
arises at node $(2,0,0)$ regardless the choice of winner at $(3,0,0)$. In the latter case, there are three possible winners
in the overbidding graph $C$ at $(3,0,0)$. If Buyer~$1$ or Buyer~$3$ win these produce the same node valuations and
give the overbidding graph $C$ at $(2,0,0)$; if Buyer~$2$ wins this 
gives the overbidding graph $D$ at $(2,0,0)$. A decision tree showing all the possible choices is shown in Figure~\ref{fig:decision}.
The reader may verify that these are the only decisions that affect the valuations at the nodes.
Thus there are ten possible extensive forms, where {\tt Yes/No} details whether or not a monotonic price trajectory is produced.
Where the tie-breaking rules {\tt preferential-ordering}, {\tt first-in-first-out}, and {\tt last-in-first-out} fit in this decision tree 
are highlighted in the figure.

Several observations are in order. First, not all of the classes of tie-breaking rule give non-monotonic price trajectories.
An example of a tie-breaking rule with monotonic prices is shown in Figure~\ref{fig:monotonic}.
In fact, the choices made on the overbidding graphs $B,C$ and $D$ only affect valuations on
nodes off the equilibrium path. The equilibrium path itself is determined uniquely by the choice made for the 
overbidding graph $A$. If the winner there is Buyer~$2$ then the prices are non-monotonic;
if the winner there is Buyer~$3$ then the prices are monotonic.

We are now ready to complete the proof of the theorem. As we have just seen, any tie-breaking rule
can be classified into one of ten classes depending upon its choices on this sequential auction.
Five of the classes lead to non-monotonic prices on this instance. For the other five classes of tie-breaking rule we
need to construct different examples on which they induces non-monotonic prices.
But this is easy to do! Take exactly the same example but with the labels of Buyer~$2$ and Buyer~$3$ interchanged.
The equilibrium paths for this sequential auction using any rule in the other five classes will then have
non-monotonic price trajectories.
\end{proof}

\subsection{Negative Utilities and Overbidding}\label{sec:negative}

We now discuss a couple of interesting observations that arise from this specific sequential auction.
First we recall another interesting property of two-buyer sequential auctions: in each round of the 
auction each buyer has a non-negative value for winning the item over the other agent \cite{GS01}.
Interestingly, this property also fails to hold for multi-buyer sequential auctions! 
\begin{theorem}
There are multi-buyer sequential auctions with weakly decreasing marginal valuations
that have subgames where one agent has a negative value for winning against one
other agent.
\end{theorem}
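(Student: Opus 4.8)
The plan is to prove this by exhibiting one explicit three-buyer sequential auction, with weakly decreasing marginal valuations for every buyer, solving its extensive form completely, checking that the equilibrium price trajectory is weakly decreasing, and then pointing to a single node $(x_1,x_2,x_3)$ of the game tree at which some buyer~$i$ strictly prefers that a particular rival~$j$ win the current item rather than winning it himself. That last condition is exactly
\[\Pi_i(\ldots,x_i+1,\ldots)\ <\ \Pi_i(\ldots,x_j+1,\ldots),\]
i.e.\ that in the single-item auction with interdependent valuations induced at that node buyer~$i$ has a negative value $w_{i,j}=v_{i,i}-v_{i,j}$ for winning against~$j$. A natural candidate is the counter-example of Theorem~\ref{thm:non-monotonic} run under a tie-breaking rule from the price-monotone class of the decision tree in Figure~\ref{fig:decision}; its extensive form (Figure~\ref{fig:monotonic}) has weakly decreasing prices on the equilibrium path, and it remains to locate a node of the stated kind within it.

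The mechanical steps are those already used in Section~\ref{sec:counter-example}: fix decreasing marginals and the tie-breaking rule; solve the game bottom-up, forming at each node the induced auction with interdependent valuations (arc weights $w_{i,j}=v_{i,i}-v_{i,j}$ read off the children's value vectors), running the ascending-price / dropout-bid procedure of Section~\ref{sec:ascending} to get the winner, the payment, and hence the node's value vector; then read off the equilibrium path from the root, confirm the prices are non-increasing, and scan the computed node values for a triple $(v,i,j)$ with $\Pi_i(\text{child where }i\text{ wins})<\Pi_i(\text{child where }j\text{ wins})$. I would display the whole extensive form in a figure so the reader can verify both the monotonicity of the price path and the offending node directly. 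The economic point --- and the reason two buyers do not suffice (cf.\ \cite{GS01}) --- is that with three buyers the identity of the current winner changes which rivals remain active in the continuation: if buyer~$i$ wins now, both rivals stay in and bid up the items $i$ still wants, whereas if rival~$j$ wins now, $j$ may become satiated and drop out, leaving $i$ to face only the third buyer on favourable terms; when this ``a rival's win satiates that rival'' effect is strong enough it outweighs the direct marginal gain $v_i(x_i+1)$, and $i$'s win-value against~$j$ becomes strictly negative. With only one rival a loss never removes a competitor from any later round, so the win-value stays non-negative in every subgame, which is consistent with the two-buyer result.

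The main obstacle is the construction and its verification: the marginals of all three buyers must stay weakly decreasing while the continuation values are arranged so that the satiation effect strictly dominates at some node and, to fit the surrounding discussion, so that the realized price path remains monotone. There is no shortcut beyond a full backward-induction computation over all nodes with $x_1+x_2+x_3\le T$, handling the dropout procedure (ties included) at each one; once a single qualifying node is exhibited, the theorem follows at once.
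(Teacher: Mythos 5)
Your approach coincides with the paper's: it takes the same eight-item, three-buyer example under a tie-breaking rule from the price-monotone class (the extensive form of Figure~\ref{fig:monotonic}) and locates a node of the game tree where one buyer's value for winning against a particular rival is negative. The only step you leave implicit is the concrete witness, which the paper supplies: at node $(0,1,0)$ Buyer~$3$ obtains $131$ from winning but $176$ if Buyer~$2$ wins, so $w_{3,2}<0$ there (and likewise at $(0,1,1)$ and $(0,1,2)$), exactly the satiation effect you describe.
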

\begin{proof}
Consider again the sequential auction shown in Figure~\ref{fig:monotonic} in Appendix~B. Focus upon the auctions with
interdependent valuations corresponding to the subgames rooted at the nodes
$(0,1,0)$, $(0,1,1)$ and $(0,1,2)$. In all three cases, Buyer~$3$ has a negative value from winning over
Buyer~$2$. For example, at node $(0,1,0)$ Buyer~$3$ has a utility of $131$ for winning but a utility of $176$ if Buyer~$2$ wins.
(Note that Buyer~$3$ does have a positive value for defeating Buyer~$1$, specifically $131-48=83$.)
Of course, this also implies there are sequential auctions with weakly decreasing marginal valuations functions
where one agent has a negative value for winning the {\bf first item} over one other agent.
\end{proof}

Second, observe in Figure~\ref{fig:3-buyers} (see also Figure~\ref{fig:interdependent2}) that in the first round Buyer~3 has a value of $176-66=110$ for winning 
over Buyer~1. This far exceeds her marginal value of $44$ for obtaining one item. 
A similar property can be seen in Figures~\ref{fig:first-in-rule} and~\ref{fig:last-in-rule} of Appendix~B.
Such ``overbidding" also arises in two-buyer sequential auctions.
The reader may wonder, however, whether this
type of ``overbidding" is responsible for the generation of non-monotonic price trajectories in multi-buyer auctions.
This is not the case. To verify this we repeated all six million experiments described in Section~\ref{sec:expts} with the ascending price 
mechanism modified to exclude the possibility of a buyer bidding higher than their marginal value for their next unit of the good. 
The proportion of instances with non-monotonic price trajectories was similar (roughly 10\% less). Moreover, there are instances where such ``overbidding'' does not arise but where the prices are non-monotonic.

\section{Experiments}\label{sec:expts}

Our experiments were based on a dataset of over six million multi-buyer sequential auctions with non-increasing valuation functions 
randomly generated from different natural discrete probability distributions. Our goal was to observe the proportion of non-monotonic price trajectories 
in these sequential auctions and to see how this varied with (i) the number of buyers, (ii) the number of items, (iii) the distribution of valuation functions, and 
(iv) the tie-breaking rule. To do this, for each auction, we computed the subgame 
perfect equilibrium corresponding to the dropout bids and evaluated the prices along the equilibrium path to test for non-monotonicity.
We repeated this test for each of the three tie breaking rules described in Section~\ref{sec:tie-breaking-1}, 
namely {\tt preferential-ordering}, {\tt first-in-first-out} and {\tt last-in-first-out}. The results from our 6,240,000 randomly generated sequential auctions are shown in Figure~\ref{fig:bar-all} in Appendix~C.

\subsection{Dataset Generation}\label{sec:exp-generation}
We now describe the methods used to generate our auction dataset. Our generator was 
implemented in \texttt{C++11}, using the GNU Compiler Collection (version 5.1.0) and the standard random number library. 
The random number library provides classes that generate pseudo-random numbers. These classes include both uniform 
random bit generators (URBGs), which generate integer sequences with a uniform distribution, and random number 
distributions, which convert the output of a URBG into various statistical distributions (such as uniform, binomial or 
Poisson distributions). In our experiments, we used the MT19937-64 implementation of the widely-used Mersenne 
Twister URBG \cite{MN98} along with the standard {\tt uniform\_int\_distribution}, {\tt binomial\_distribution} 
and {\tt poisson\_distribution} classes to generate the valuation functions in the dataset. We restricted our attention to 
integral non-increasing marginal valuations and bounded the maximum marginal value of a single item by 100. 
The purpose of this choice of constraints was to allow for a wide variety of auction instances whilst still allowing for 
a reasonable chance for ties to arise in the ascending auction mechanism, thus enabling us 
to observe any potential effects of varying the tie-breaking rule.

Our dataset contains auctions with $n=3$, $4$ and $5$ different buyers. For the 3-buyer case, we varied the 
number $T$ of items from $T=2$ to $T=16$. For each auction size, we generated a total of $240,000$ instances. 
Specifically, let $V_i$ be the valuation function of buyer~$i$, that is, $V_i(0) = 0$ and $V_i$ 
assigns a non-negative value for every integer $\ell$, $1 \leq \ell \leq T$, corresponding to buyer~$i$'s value for any 
set of $\ell$ items. Let $v_i(\ell)$ be the marginal value of buyer~$i$ for winning an $\ell^{\text{th}}$ item, that 
is, $v_i(\ell) = V_i(\ell) - V_i(\ell-1)$. We used the three aforementioned distributions to each generate $80,000$ sets of valuation functions.
To generate the values $v_i(\ell)$, for each buyer~$i$ we first uniformly 
selected a maximum marginal value $w_i$ in the interval $[1,100]$. For half the instances, we generated $w_i$ independently for every player, and for the other half, we made $w_i$ equal for all players. Subsequently, for each 
buyer~$i$, we chose the number of nonzero valuations $m_i$ uniformly in $[1,T]$. For the first distribution, we then 
independently generated and sorted $m_i$ values uniformly in $[1,w_i]$, and padded this sequence with $T-m_i$ 
zeros, to generate a decreasing integer sequence: the valuation function for buyer~$i$. For the second distribution, for each 
buyer~$i$ we generated $m_i$ values from a binomial distribution with parameters $n=w_i$ and $p=0.5$, and sorted and padded this sequence with $T-m_i$ 
zeros. For the third 
distribution, we let $u_{i,1} = w_i$, and for each $j \geq 2$ we let $u_{i,j} = \max(0,u_{i,j-1}-x_j)$, where $x_j$ was drawn 
from a Poisson distribution with parameter $\lambda = \frac{w_i}{m_i}$. 

The above steps were repeated for the 4-buyer and 5-buyer cases, varying the number of items from $T=2$ to $T=12$ in each case. In each of these cases, we generated a total of $120,000$ instances for each auction size, with $40,000$ instances generated from each of the three distributions.

Let us comment on the reasoning behind these choices for the sizes of our auctions. 
Our sequential auctions, with at most five bidders and at most ten items, could be solved extremely quickly; this allowed us to analyze our
large dataset. But it can be shown that the number of nodes in the extensive form for a sequential auction with $n$ buyers and $T$ items is exactly
\begin{equation*}
    \sum_{t=0}^T \binom{t+n-1}{n-1} = \binom{T+n}{n} =  \binom{T+n}{T}
\end{equation*}
Thus the size of the extensive form grows exponentially in the number of buyers and the number of items.
So, whilst with additional time we can easily solve slightly larger instances, we cannot expect to solve significantly larger instances. 
We remark that the auction sizes we can solve are comparable to many of the real sequential auctions described in the Introduction.

\subsection{Experimental Results}\label{sec:exp-results}
The results from our 6,240,000 randomly generated sequential auctions are shown in Figure~\ref{fig:bar-all}.
In these bar charts there is one bar for each combination of auction size and data structure ({\tt preferential-ordering}, 
{\tt first-in-first-out} and {\tt last-in-first-out}). Each bar shows the number of auctions of that type that induced 
non-monotonic prices. For three buyers there were 240,000 sequential auctions generated of each type.
For example, for sequential auctions with three buyers and five items that use the preferential-ordering tie-breaking rule,
there were 7 auctions out of 240,000 that had non-monotonic prices. For four and five buyers there
were 120,000 auctions of each type.

Figure~\ref{fig:bar-all} shows all the tests together. Recall that the valuation functions in each sequential auction 
were generated in one of three different ways (uniform, Poisson, binomial). In Appendix~C, we show in Figure~\ref{fig:threebuyers}, Figure~\ref{fig:fourbuyers} and Figure~\ref{fig:fivebuyers} these three cases for 3-buyer, 4-buyer and 5-buyer sequential
auctions respectively. We found no examples with less than 5 items that showed non-monotonicity, so the cases $T = 2,3,4$ are omitted.

As can be observed, for a fixed number of buyers, there is a slight upward drift in the proportion of non-monotonic
price trajectories as the number of items increases. Intuitively that seems unsurprising, as with longer price sequences there
are more time periods at which deviations from monotonicity can arise. A very interesting question would be to study
the limit of the proportion of non-monotonic price trajectories as the number of items gets very large. Unfortunately,
due to the exponential explosion in the number of game tree nodes discussed above, this is a question that cannot be 
studied experimentally.

\pgfplotstableread[row sep=\\,col sep=&]{
    T &   PQ &    Q &    S & Total \\
   5 &   3 &   3 &   6 &    12 \\
   6 &   0 &   0 &   6 &     6 \\
   7 &  20 &  19 &  23 &    62 \\
   8 &  20 &  25 &  26 &    71 \\
   9 &  28 &  30 &  34 &    92 \\
  10 &  26 &  27 &  31 &    84 \\
  11 &  32 &  35 &  39 &   106 \\
  12 &  53 &  54 &  58 &   165 \\
  13 &  46 &  47 &  53 &   146 \\
  14 &  56 &  61 &  60 &   177 \\
  15 &  55 &  54 &  66 &   175 \\
  16 &  38 &  41 &  48 &   127 \\
}\datathreebinom

\pgfplotstableread[row sep=\\,col sep=&]{
    T &   PQ &    Q &    S & Total \\
   5 &   2 &   2 &   2 &     6 \\
   6 &   5 &   8 &  10 &    23 \\
   7 &   8 &   8 &  12 &    28 \\
   8 &   8 &   9 &  13 &    30 \\
   9 &  11 &  12 &  14 &    37 \\
  10 &  22 &  22 &  23 &    67 \\
  11 &  22 &  18 &  32 &    72 \\
  12 &  21 &  20 &  25 &    66 \\
  13 &  16 &  21 &  26 &    63 \\
  14 &  37 &  35 &  46 &   118 \\
  15 &  37 &  34 &  41 &   112 \\
  16 &  27 &  29 &  37 &    93 \\
}\datathreeindep

\pgfplotstableread[row sep=\\,col sep=&]{
    T &   PQ &    Q &    S & Total \\
   5 &   2 &   2 &   2 &     6 \\
   6 &   2 &   3 &   6 &    11 \\
   7 &   3 &   3 &   3 &     9 \\
   8 &   8 &   6 &   9 &    23 \\
   9 &  16 &  15 &  23 &    54 \\
  10 &  21 &  17 &  31 &    69 \\
  11 &  23 &  26 &  21 &    70 \\
  12 &  24 &  19 &  28 &    71 \\
  13 &  24 &  20 &  28 &    72 \\
  14 &  26 &  24 &  27 &    77 \\
  15 &  31 &  20 &  34 &    85 \\
  16 &  22 &  19 &  28 &    69 \\
}\datathreepoiss

\pgfplotstableread[row sep=\\,col sep=&]{
    T &   PQ &    Q &    S & Total \\
   5 &   7 &   7 &  10 &    24 \\
   6 &   7 &  11 &  22 &    40 \\
   7 &  31 &  30 &  38 &    99 \\
   8 &  36 &  40 &  48 &   124 \\
   9 &  55 &  57 &  71 &   183 \\
  10 &  69 &  66 &  85 &   220 \\
  11 &  77 &  79 &  92 &   248 \\
  12 &  98 &  93 & 111 &   302 \\
  13 &  86 &  88 & 107 &   281 \\
  14 & 119 & 120 & 133 &   372 \\
  15 & 123 & 108 & 141 &   372 \\
  16 &  87 &  89 & 113 &   289 \\
}\datathreetotal

\pgfplotstableread[row sep=\\,col sep=&]{
    T &   PQ &    Q &    S & Total \\
   5 &   0 &   0 &   0 &     0 \\
   6 &   0 &   2 &   1 &     3 \\
   7 &   4 &   4 &   6 &    14 \\
   8 &   5 &   6 &   8 &    19 \\
   9 &   2 &   2 &   2 &     6 \\
  10 &   3 &   3 &  10 &    16 \\
  11 &   6 &   8 &  12 &    26 \\
  12 &  10 &  13 &  15 &    38 \\
}\datafourbinom

\pgfplotstableread[row sep=\\,col sep=&]{
    T &   PQ &    Q &    S & Total \\
   5 &   0 &   0 &   0 &     0 \\
   6 &   1 &   1 &   1 &     3 \\
   7 &   3 &   2 &   5 &    10 \\
   8 &   7 &   6 &   9 &    22 \\
   9 &  11 &  11 &  14 &    36 \\
  10 &  14 &  15 &  16 &    45 \\
  11 &  14 &  15 &  19 &    48 \\
  12 &  15 &  10 &  20 &    45 \\
}\datafourindep

\pgfplotstableread[row sep=\\,col sep=&]{
    T &   PQ &    Q &    S & Total \\
   5 &   0 &   0 &   0 &     0 \\
   6 &   0 &   1 &   1 &     2 \\
   7 &   2 &   2 &   5 &     9 \\
   8 &   4 &   3 &   4 &    11 \\
   9 &   7 &   6 &  12 &    25 \\
  10 &  12 &  11 &  20 &    43 \\
  11 &  10 &   5 &  22 &    37 \\
  12 &   7 &   6 &  13 &    26 \\
}\datafourpoiss

\pgfplotstableread[row sep=\\,col sep=&]{
    T &   PQ &    Q &    S & Total \\
   5 &   0 &   0 &   0 &     0 \\
   6 &   1 &   4 &   3 &     8 \\
   7 &   9 &   8 &  16 &    33 \\
   8 &  16 &  15 &  21 &    52 \\
   9 &  20 &  19 &  28 &    67 \\
  10 &  29 &  29 &  46 &   104 \\
  11 &  30 &  28 &  53 &   111 \\
  12 &  32 &  29 &  48 &   109 \\
}\datafourtotal

\pgfplotstableread[row sep=\\,col sep=&]{
    T &   PQ &    Q &    S & Total \\
   5 &   1 &   2 &   1 &     4 \\
   6 &   0 &   0 &   3 &     3 \\
   7 &   2 &   0 &   5 &     7 \\
   8 &   4 &   5 &   6 &    15 \\
   9 &   2 &   3 &   7 &    12 \\
  10 &   4 &   3 &   7 &    14 \\
  11 &   2 &   2 &   4 &     8 \\
  12 &   4 &   6 &   7 &    17 \\
}\datafivebinom

\pgfplotstableread[row sep=\\,col sep=&]{
    T &   PQ &    Q &    S & Total \\
   5 &   0 &   0 &   0 &     0 \\
   6 &   2 &   3 &   1 &     6 \\
   7 &   2 &   1 &   7 &    10 \\
   8 &   3 &   2 &   5 &    10 \\
   9 &   2 &   2 &   9 &    13 \\
  10 &   8 &   6 &  11 &    25 \\
  11 &   8 &   6 &  13 &    27 \\
  12 &  10 &   5 &  18 &    33 \\
}\datafiveindep

\pgfplotstableread[row sep=\\,col sep=&]{
    T &   PQ &    Q &    S & Total \\
   5 &   0 &   0 &   0 &     0 \\
   6 &   0 &   0 &   0 &     0 \\
   7 &   0 &   1 &   1 &     2 \\
   8 &   2 &   2 &   4 &     8 \\
   9 &   4 &   5 &   4 &    13 \\
  10 &   3 &   7 &   7 &    17 \\
  11 &   6 &   2 &  12 &    20 \\
  12 &   9 &   3 &  16 &    28 \\
}\datafivepoiss

\pgfplotstableread[row sep=\\,col sep=&]{
    T &   PQ &    Q &    S & Total \\
   5 &   1 &   2 &   1 &     4 \\
   6 &   2 &   3 &   4 &     9 \\
   7 &   4 &   2 &  13 &    19 \\
   8 &   9 &   9 &  15 &    33 \\
   9 &   8 &  10 &  20 &    38 \\
  10 &  15 &  16 &  25 &    56 \\
  11 &  16 &  10 &  29 &    55 \\
  12 &  23 &  14 &  41 &    78 \\
}\datafivetotal

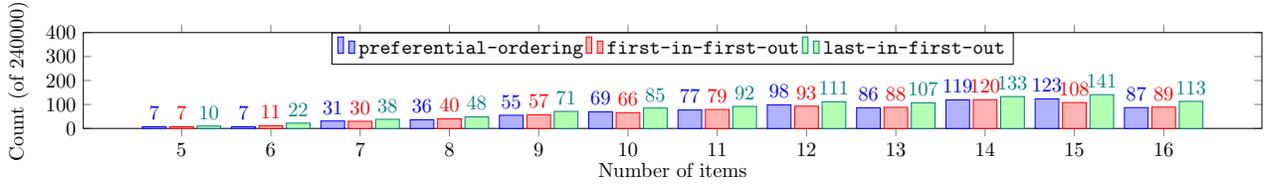
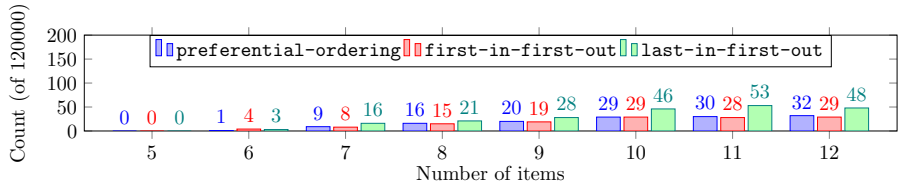
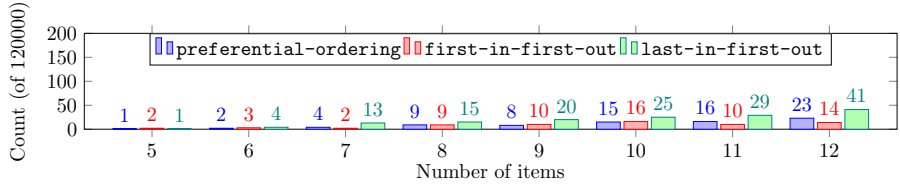
\begin{figure}[ht]
    \centering
    \begin{subfigure}{\textwidth}
        \centering
        \begin{tikzpicture}[scale=0.7]
            \begin{axis}[
                    ybar,
                    bar width=.45cm,
                    width=\textwidth,
                    height=.2\textwidth,
                    legend style={at={(0.5,1)},
                        anchor=north,legend columns=-1},
                    symbolic x coords={5,6,7,8,9,10,11,12,13,14,15,16},
                    xtick=data,
                    xlabel={Number of items},
                    nodes near coords,
                    nodes near coords align={vertical},
                    ymin=0,ymax=400,
                    ylabel={Count (of 240000)},
                    x post scale=1.45
                ]
                \addplot table[x=T,y=PQ]{\datathreetotal};
                \addplot table[x=T,y=Q]{\datathreetotal};
                \addplot[green!50!blue,fill=green!30] table[x=T,y=S]{\datathreetotal};
                \legend{{\tt preferential-ordering}, {\tt first-in-first-out}, {\tt last-in-first-out}}
            \end{axis}
        \end{tikzpicture}
        \caption{3 buyers}
        \label{fig:threetotal}
    \end{subfigure}
    
    \begin{subfigure}{\textwidth}
        \centering
        \begin{tikzpicture}[scale=0.7]
            \begin{axis}[
                    ybar,
                    bar width=.45cm,
                    width=\textwidth,
                    height=.2\textwidth,
                    legend style={at={(0.5,1)},
                        anchor=north,legend columns=-1},
                    symbolic x coords={5,6,7,8,9,10,11,12},
                    xtick=data,
                    xlabel={Number of items},
                    nodes near coords,
                    nodes near coords align={vertical},
                    ymin=0,ymax=200,
                    ylabel={Count (of 120000)},
                ]
                \addplot table[x=T,y=PQ]{\datafourtotal};
                \addplot table[x=T,y=Q]{\datafourtotal};
                \addplot[green!50!blue,fill=green!30] table[x=T,y=S]{\datafourtotal};
                \legend{{\tt preferential-ordering}, {\tt first-in-first-out}, {\tt last-in-first-out}}
            \end{axis}
        \end{tikzpicture}
        \caption{4 buyers}
        \label{fig:fourtotal}
    \end{subfigure}
    
    \begin{subfigure}{\textwidth}
        \centering
        \begin{tikzpicture}[scale=0.7]
            \begin{axis}[
                    ybar,
                    bar width=.45cm,
                    width=\textwidth,
                    height=.2\textwidth,
                    legend style={at={(0.5,1)},
                        anchor=north,legend columns=-1},
                    symbolic x coords={5,6,7,8,9,10,11,12},
                    xtick=data,
                    xlabel={Number of items},
                    nodes near coords,
                    nodes near coords align={vertical},
                    ymin=0,ymax=200,
                    ylabel={Count (of 120000)},
                ]
                \addplot table[x=T,y=PQ]{\datafivetotal};
                \addplot table[x=T,y=Q]{\datafivetotal};
                \addplot[green!50!blue,fill=green!30] table[x=T,y=S]{\datafivetotal};
                \legend{{\tt preferential-ordering}, {\tt first-in-first-out}, {\tt last-in-first-out}}
            \end{axis}
        \end{tikzpicture}
        \caption{5 buyers}
        \label{fig:fivetotal}
    \end{subfigure}
  \caption{Bar Charts showing the Frequency of Non-Monotonic Price Trajectories}
    \label{fig:bar-all}
\end{figure}

The main conclusion to be drawn from these experiments is that non-monotonic prices are extremely rare. Of the 6,240,000 auctions, the {\tt preferential-ordering} tie-breaking rule produced just 1,100 violations of the declining price anomaly. The {\tt first-in-first-out} rule gave 986 violations and the {\tt last-in-first-out} rule gave 1,334 violations. The overall observed rate of non-monotonicity over these 18 million tests was $0.000183$.

\section{Conclusion}\label{sec:conclusion}
We have shown that the declining price anomaly does not always hold in multi-buyer sequential auctions.
This result provides an explanation for the difficulty hitherto of obtaining quantitative analyses of
multi-buyer sequential auctions. Our experiments show that the declining price anomaly is very rarely violated in
randomly generated sequential auctions. 

\setlength{\bibsep}{2pt}
\bibliographystyle{apa}
\bibliography{references_monotonic}

\begin{thebibliography}{}

\bibitem[\protect\astroncite{Ashenfelter}{1989}]{Ash89}
Ashenfelter, O. (1989).
\newblock How auctions work for wine and art.
\newblock {\em The Journal of Economics Perspectives}, 3(3):23--36.

\bibitem[\protect\astroncite{Ashenfelter and Genesove}{1992}]{AG92}
Ashenfelter, O. and Genesove, D. (1992).
\newblock Legal negotiation and settlement.
\newblock {\em American Economic Review}, 82:501--505.

\bibitem[\protect\astroncite{Ashta}{2006}]{Ashta06}
Ashta, A. (2006).
\newblock Wine auctions: More explanations for the declining price anomaly.
\newblock {\em Journal of Wine Research}, 17(1):53--62.

\bibitem[\protect\astroncite{Bae et~al.}{2008}]{BBB08}
Bae, J., Beigman, E., Berry, R., Honig, M., and Vohra, R. (2008).
\newblock Sequential bandwidth and power auctions for distributed spectrum
  sharing.
\newblock {\em Journal on Selected Areas in Communications}, 26(7):1193--1203.

\bibitem[\protect\astroncite{Beggs and Graddy}{1997}]{BG97}
Beggs, A. and Graddy, K. (1997).
\newblock Declining values and the afternoon effect: evidence from art
  auctions.
\newblock {\em Rand Journal of Economics}, 28:544--565.

\bibitem[\protect\astroncite{Black and de~Meza}{1992}]{BM92}
Black, J. and de~Meza, D. (1992).
\newblock Systematic price differences between successive auctions are no
  anomaly.
\newblock {\em Journal of Economics and Management Strategy}, 1(4):607--628.

\bibitem[\protect\astroncite{Buccola}{1982}]{Buc82}
Buccola, S. (1982).
\newblock Price trends at livestock auctions.
\newblock {\em American Journal of Agricultural Economics}, 64:63--69.

\bibitem[\protect\astroncite{Burns}{1985}]{Bur85}
Burns, P. (1985).
\newblock Experience and decision making: a comparison of students and
  businessmen in a simulated progressive auction.
\newblock In Smith, V., editor, {\em Research in Experimental Economics},
  volume~3, pages 139--157. JAI Press.

\bibitem[\protect\astroncite{Chanel et~al.}{1996}]{CGV96}
Chanel, O., G?rard-Varet, L., and Vincent, S. (1996).
\newblock Auction theory and practice: Evidence from the market for jewellery.
\newblock In Ginsburgh, V. and Menger, P., editors, {\em Economics of the Arts:
  Selected Essays}, pages 135--149. North-Holland.

\bibitem[\protect\astroncite{Funk}{1996}]{Fun96}
Funk, P. (1996).
\newblock Auctions with interdependent valuations.
\newblock {\em International Journal of Game Theory}, 25:51--64.

\bibitem[\protect\astroncite{Gale and Stegeman}{2001}]{GS01}
Gale, I. and Stegeman, M. (2001).
\newblock Sequential auctions of endogenously valued objects.
\newblock {\em Games and Economic Behavior}, 36(1):74--103.

\bibitem[\protect\astroncite{Gallegati et~al.}{2011}]{GGK11}
Gallegati, M., Giulioni, G., Kirman, A., and Palestrini, A. (2011).
\newblock What's that got to do with the price of fish?: Buyer's behavior on
  the {A}ncona fish market.
\newblock {\em Journal Econ. Behav. Organ.}, 80(1):20--33.

\bibitem[\protect\astroncite{Ginsburgh}{1998}]{Gin98}
Ginsburgh, V. (1998).
\newblock Absentee bidders and the declining price anomaly in wine auctions.
\newblock {\em Journal of Political Economy}, 106(6):319--335.

\bibitem[\protect\astroncite{Ginsburgh and van Ours}{2007}]{GO07}
Ginsburgh, V. and van Ours, J. (2007).
\newblock On organizing a sequential auction: Results from a natural experiment
  by {C}hristie's.
\newblock {\em Oxford Economic Papers}, 59(1):1--15.

\bibitem[\protect\astroncite{Hu and Zou}{2015}]{HZ15}
Hu, A. and Zou, L. (2015).
\newblock Sequential auctions, price trends, and risk preferences.
\newblock {\em Journal of Economic Theory}, 158:319--335.

\bibitem[\protect\astroncite{Jehiel and Moldovanu}{1996}]{JM96}
Jehiel, P. and Moldovanu, B. (1996).
\newblock Strategic nonparticipation.
\newblock {\em Rand Journal of Economics}, 27(1):84--98.

\bibitem[\protect\astroncite{Lambson and Thurston}{2006}]{LT06}
Lambson, V. and Thurston, N. (2006).
\newblock Sequential auctions: Theory and evidence from the {S}eattle {F}ur
  {E}xchange.
\newblock {\em Games and Economic Behavior}, 37(1):70--80.

\bibitem[\protect\astroncite{Lusht}{1994}]{Lus94}
Lusht, K. (1994).
\newblock Order and price in a sequential auction.
\newblock {\em Journal of Real Estate Finance and Economics}, 8:259--266.

\bibitem[\protect\astroncite{Matsumoto and Nishimura}{1998}]{MN98}
Matsumoto, M. and Nishimura, T. (1998).
\newblock Mersenne twister: a 623-dimensionally equidistributed uniform
  pseudo-random number generator.
\newblock {\em ACM Transactions on Modeling and Computer Simulation (TOMACS)},
  8(1):3--30.

\bibitem[\protect\astroncite{McAfee and Vincent}{1993}]{MV93}
McAfee, P. and Vincent, D. (1993).
\newblock The declining price anomaly.
\newblock {\em Journal of Economic Theory}, 60:191--212.

\bibitem[\protect\astroncite{Mezzetti}{2011}]{Mez11}
Mezzetti, C. (2011).
\newblock Sequential auctions with informational externalities and aversion to
  price risk: decreasing and increasing price sequences.
\newblock {\em The Economic Journal}, 121(555):990--1016.

\bibitem[\protect\astroncite{Milgrom and Weber}{2000}]{MW00}
Milgrom, P. and Weber, R. (2000).
\newblock A theory of auctions and competitive bidding~{II}.
\newblock In Klemperer, P., editor, {\em The Economic Theory of Auctions}.
  Edward Elgar.

\bibitem[\protect\astroncite{Paes~Leme et~al.}{2012}]{PST12}
Paes~Leme, R., Syrgkanis, V., and Tardos, E. (2012).
\newblock Sequential auctions and externalities.
\newblock In {\em Proceedings of 23rd Symposium on Discrete Algorithms}, pages
  869--886. Society for Industrial and Applied Mathematics.

\bibitem[\protect\astroncite{Pesando and Shum}{1996}]{PS96}
Pesando, J. and Shum, P. (1996).
\newblock Price anomalies at auction: Evidence from the market for modern
  prints.
\newblock In Ginsburgh, V. and Menger, P., editors, {\em Economics of the Arts:
  Selected Essays}, pages 113--134. North-Holland.

\bibitem[\protect\astroncite{Rodriguez}{2009}]{Rod09}
Rodriguez, G. (2009).
\newblock Sequential auctions with multi-unit demands.
\newblock {\em The B.E. Journal of Theoretical Economics}.

\bibitem[\protect\astroncite{Salladarre et~al.}{2017}]{SGL17}
Salladarre, F., Guilloteau, P., Loisel, P., and Ollivier, P. (2017).
\newblock The declining price anomaly in sequential auctions of identical
  commodities with asymmetric bidders: empirical evidence from the {N}ephrops
  norvegicus market in {F}rance.
\newblock {\em Agricultural Economics}, 48:731--741.

\bibitem[\protect\astroncite{Sosnick}{1961}]{Sos61}
Sosnick, S. (1961).
\newblock Bidding strategy at ordinary auctions.
\newblock {\em Journal of Farm Economics}, 45(1):8--37.

\bibitem[\protect\astroncite{Thiel and Petry}{1995}]{TP95}
Thiel, S. and Petry, G. (1995).
\newblock Bidding behaviour in second-price auctions: Rare stamp sales,
  1923-1937.
\newblock {\em Applied Economics}, 27(1):11--16.

\bibitem[\protect\astroncite{Tu}{2010}]{Tu10}
Tu, Z. (2010).
\newblock A resale explanation for the declining price anomaly in sequential
  auctions.
\newblock {\em Review of Applied Economics}, 6(1-2):113--127.

\bibitem[\protect\astroncite{van~den Berg et~al.}{2001}]{BOP01}
van~den Berg, G., van Ours, J., and Pradhan, M. (2001).
\newblock The declining price anomaly in {D}utch {D}utch rose auctions.
\newblock {\em American Economic Review}, 91:1055--1062.

\bibitem[\protect\astroncite{Weber}{1983}]{Web83}
Weber, R. (1983).
\newblock Multiple object auctions.
\newblock In Engelbrecht-Wiggans, R., Shubik, M., and Stark, R., editors, {\em
  Auctions, Bidding and Contracting: Use and Theory}, pages 165--191. New York
  University Press.

\end{thebibliography}

\newpage
\section*{Appendix A}
In this appendix we prove the claim that in a second-price auction with interdependent valuations
it may be the case that no strategies survive the iterative deletion of weakly dominated strategies
if we allow domination to be by a higher bid.

\begin{theorem}
There are second-price auctions with interdependent valuations where no  strategies survive 
the iterative deletion of weakly dominated strategies
\end{theorem}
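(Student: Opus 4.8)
The plan is to pin down an explicit second-price auction with interdependent valuations on a small number of buyers (three, the minimum possible, should suffice) and then produce a valid sequence of iterated weak-dominance deletions under which the surviving bid set of every buyer shrinks to the empty set. Note that two buyers cannot work: with a single opponent $j$, buyer~$i$'s payoff from winning at price $p$ minus its payoff from losing is $(v_{i,i}-v_{i,j})-p$, so bidding $v_{i,i}-v_{i,j}$ is weakly dominant and hence survives every sequence. So the example needs at least three buyers, and the key design choice is a cyclic ``predator--prey'' externality structure: with indices read modulo~$3$, buyer~$i$ strictly prefers being outbid by buyer~$i+1$ to winning ($v_{i,i+1}>v_{i,i}$) but strictly prefers winning to being outbid by buyer~$i-1$ ($v_{i,i}>v_{i,i-1}$). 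The purpose of this is that, no matter how the opponents' feasible bids get restricted, buyer~$i$ always simultaneously ``wants to outbid its prey'' and ``wants to be outbid by its predator'', so it never has a single undominated best reply and the deletion process can always be pushed further.

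I would run the deletion in rounds. First, a clean-up round: for each~$i$, delete every bid strictly above $M_i:=\max_{j\ne i}(v_{i,i}-v_{i,j})$ (for our values $M_i=v_{i,i}-v_{i,i-1}$); such a bid is weakly dominated by $M_i$ because in a second-price auction it differs from $M_i$ only by occasionally making buyer~$i$ win at a price above its maximum willingness-to-pay, which is strictly worse, and the opponents can realise such a price. After this round $S_i=[0,M_i]$. Then, in a generic round, with the opponents currently confined to $S_{i-1}$ and $S_{i+1}$, I would establish two domination facts for buyer~$i$. (a) \emph{An upward push:} any bid $b$ lying above the top of the predator's current set but below the threshold $v_{i,i}-v_{i,i-1}$ is weakly dominated by a higher bid, since once the predator can no longer be the pivotal competitor, beating the prey at such a price is strictly profitable while the prey can be pivotal at an intermediate price. (b) \emph{A downward push:} any bid $b$ large enough to force buyer~$i$ to outbid part of the predator's current set is weakly dominated by a lower bid that instead lets the predator win, since winning against the predator is strictly worse than being outbid by it and the predator can be the pivotal competitor. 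Applying (a) raises the effective bottom of $S_i$ and applying (b) lowers the effective top of $S_i$; cycling through $i=1,2,3$ and iterating, the bottom and top of each $S_i$ are driven towards one another.

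The valuations should be calibrated so that the recursion these two pushes induce on the interval endpoints is strictly contracting with \emph{no strictly positive fixed point}; then the top and bottom of each $S_i$ eventually cross and $S_i$ becomes empty. The only remaining subtlety is a possible collapse to a common singleton $\{u\}$ (all three buyers confined to $\{u\}$): but against opponents bidding $u$, bidding $u$ yields a tie that buyer~$i$ may win, and since that win may be against its predator it is strictly worse than being outbid, so the bid $u$ is itself weakly dominated by a slightly lower bid and the collapse proceeds to $\emptyset$. Once some valid sequence satisfies $\bigcap_\tau S_i^\tau=\emptyset$ for all~$i$, no strategy lies in the intersection for that sequence, hence no strategy survives and the theorem follows. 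Alternatively, if a single such master sequence is awkward to exhibit, it suffices to give, for each individual bid $b$ of each buyer~$i$, its own valid sequence that deletes~$b$.

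I expect the calibration in the generic round to be the main obstacle: the nine numbers $v_{i,j}$ must simultaneously respect the cyclic predator--prey inequalities, produce an endpoint recursion with no positive fixed point, and make every weak-dominance check go through against \emph{all} opponent profiles in the current (generally non-interval) sets, including the boundary profiles where bids coincide and tie-breaking intervenes. The cleanest way to control the tie cases and the non-interval intermediate sets is to verify the domination claims against the interval hulls of the current sets --- this only enlarges the set of opponent profiles, so it remains sufficient --- and to fix a tie-breaking convention (or a small bidding increment, as in the first-price treatment) so that the pivotal-competitor arguments in (a) and~(b) hold with strict inequality.
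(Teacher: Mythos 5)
Your high-level strategy is right in several respects: three buyers is indeed the minimum, the second-price feature you exploit (deletion by a \emph{higher} bid) is exactly the engine of the paper's proof, and your fallback of exhibiting, for each bid $b$, its own valid sequence that deletes $b$ is precisely what the paper does --- it gives two finite sequences whose surviving sets for Buyer~1 are $\{1\}$ and $\{2\}$ and concludes from $\{1\}\cap\{2\}=\emptyset$. However, your concrete construction cannot work. Under the cyclic predator--prey structure ($v_{i,i+1}>v_{i,i}>v_{i,i-1}$ for all $i$ mod $3$), set $m=\min_i\,(v_{i,i}-v_{i,i-1})>0$ and consider the invariant ``$[0,m)\subseteq S_j$ for every buyer $j$.'' It holds initially and blocks every further deletion of a bid $b\in[0,m)$ of any buyer $i$: a putative up-domination of $b$ by $b'>b$ fails against the profile where the predator $i+1$ bids some $c\in(b,\min(b',m))$ (feasible by the invariant) and the prey bids $0$, since raising to $b'$ then displaces the predator at price $c\ge 0$, giving utility $v_{i,i}-c\le v_{i,i}<v_{i,i+1}$, strictly worse than letting the predator win; and a putative down-domination of $b$ by $b''<b$ fails against the profile where the prey bids some $c\in(b'',b)\subseteq[0,m)$ and the predator bids $0$, since winning at price $c<m\le v_{i,i}-v_{i,i-1}$ is strictly better than losing to the prey. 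Hence $[0,m)$ survives for every buyer under every valid sequence, and the calibration you hope for (``no strictly positive fixed point'') does not exist for this externality pattern. The paper's example instead makes \emph{all} externalities positive, with $w_{i,j}=v_{i,i}-v_{i,j}$ chosen so that Buyer~1's ambiguous interval $[w_{1,2},w_{1,3}]=[1,2]$ lies strictly below both opponents' minimum willingness-to-pay ($3$ for Buyer~2, $5$ for Buyer~3); each opponent can therefore vacate $[1,2]$ by an up-domination, and vacating it with Buyer~3 versus with Buyer~2 pins Buyer~1's surviving set to $\{1\}$ versus $\{2\}$.

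A second, smaller problem is your endgame. A bid can only be deleted if the dominating bid remains in the \emph{current} set $S_i^{\tau}$, so $S_i^{\tau}$ is nonempty at every finite stage and ``the top and bottom cross and $S_i$ becomes empty'' cannot occur; in particular, once all three buyers are confined to $\{u\}$ you cannot delete $u$, because the ``slightly lower bid'' you would dominate it with has already been removed. An empty intersection $\bigcap_\tau S_i^\tau=\emptyset$ along a single infinite sequence is not excluded by the definition, but your mechanism does not produce one, so you genuinely need the per-bid (disjoint-outcomes) formulation --- which, again, is the route the paper takes with two finite sequences.
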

\begin{proof}
Consider the $3$-buyer auction with interdependent valuations shown in Figure~\ref{fig:nothing}.

 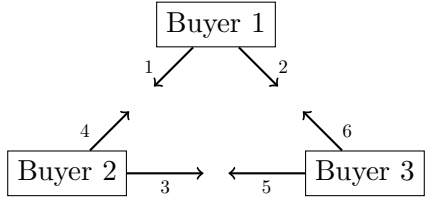
\begin{figure}[h]
\centering
 \begin{tikzpicture}[scale=0.33]
 \node[rectangle,draw](B1) at (0,0) {Buyer 1};
\node[rectangle,draw, scale=1](B2) at (-6,-6) {Buyer 2};
\node[rectangle,draw, scale=1](B3) at (6,-6) {Buyer 3};

\draw [->,  thick] (B1) -- (-2.5,-2.5) node[midway, left, scale = .66]{$1\ \ $};
\draw [->, thick] (B2) -- ( -3.5 ,-3.5) node[midway, left, scale = .66]{$4\ \ $};
\draw [->, thick] (B1) -- (2.5, -2.5) node[midway,right, scale = .66]{$\ \ 2$};
\draw [->,  thick] (B2) -- (-.5,-6) node[midway, below, scale = .66]{$3$};
\draw [->, thick] (B3) -- ( .5 ,-6) node[midway,below, scale = .66]{$5$};
\draw [->, thick] (B3) -- (3.5,-3.5) node[midway,right, scale = .66]{$\ \ 6$};
 \end{tikzpicture}
\caption{{\em A second-price auction with interdependent valuations where no strategies survive the iterative 
deletion of weakly dominated strategies.}}
\label{fig:nothing}
\end{figure}

Let's now examine what happens when we use the two different orderings  illustrated in Figure~\ref{tab:1} to delete weakly dominated strategies.

\begin{figure}[h]
	\centering
	\resizebox{13cm}{!}{
	\begin{tabular}{|c|c|c|c|}
		\hline 
		& Buyer 1 & Buyer 2 & Buyer 3 \\ 
		\hline 
		$S^0$ & $[0,\infty)$ & $[0,\infty)$ & $[0,\infty)$ \\ 
		\hline 
		$S^1$ & $[0,\infty)$ & $[0,\infty)$ & $[0,5]\setminus [1,2]$ \\ 
		\hline 
		$S^2$ & $\{1\}$ & $[0,\infty)$ & $[0,5]\setminus [1,2]$ \\ 
		\hline
		\end{tabular}
	\quad
	\begin{tabular}{|c|c|c|c|}
		\hline 
		& Buyer 1 & Buyer 2 & Buyer 3 \\
		\hline 
		$S^0$ & $[0,\infty)$ & $[0,\infty)$ & $[0,\infty)$ \\ 
		\hline 
		$S^1$ & $[0,\infty)$ & $[0,4]\setminus [1,2]$ & $[0,\infty)$ \\ 
		\hline 
		$S^2$ & $\{2\}$ & $[0,4]\setminus [1,2]$ & $[0,\infty)$ \\ 
		\hline
	\end{tabular}}
	\caption{{\em Two processes that together eliminate every strategy for Buyer~$1$.}}
	\label{tab:1}
\end{figure}
Consider first the iterative process on the LHS of Figure~\ref{tab:1}.
Observe that Buyer~$3$ is willing to pay $6$ to beat Buyer~$1$ and $5$ to beat Buyer~$2$.
It follows that any bid above $6$ is weakly dominated by a bid of $6$. Moreover, as this is a second-price auction,
any bid below $5$ is weakly dominated by a bid of $5$ (we emphasize that this latter fact does not hold 
in the case of first-price auctions).
Now, rather than deleting all these bids immediately, imagine that Buyer~$3$
deletes any bid over $6$ {\bf and} any bid between $1$ and $2$.
Therefore $S^1_3 = [0,6] \setminus [1,2]$. At this stage the undeleted strategies for Buyer~$1$ and Buyer~$2$
remain $S^1_1=S^1_2=[0,\infty)$. 

Next observe that  Buyer~$1$ is willing to pay at most $1$ to beat Buyer~$2$ but up to $2$ to beat Buyer~$3$. 
Because this is a second-price auction, it immediately follows that any bid below one or above two is weakly dominated.
Now let's compare the outcomes for Buyer~$1$ between 
bidding $1$ and bidding $x \in (1,2]$.
If Buyer~$2$ and Buyer~$3$ are both bidding below one then Buyer~$1$ wins with a bid of $1$ or a bid of $x$.
If either Buyer~$2$ or Buyer~$3$ is bidding greater than $x$ then Buyer~$1$ loses with a bid of $1$ or a bid of $x$.
So suppose the highest bid from Buyer~$2$ and Buyer~$3$ is between $1$ and $x$. If this highest bid is from Buyer~$2$ then 
Buyer~$1$ would prefer to lose and so bidding $1$ is preferable to bidding $x$. 
On the other hand, if this highest bid is from Buyer~$3$ then 
Buyer~$1$ would prefer to lose and so bidding $x$ is preferable to bidding $1$. But the latter case cannot happen as
the strategy space of Buyer~$3$ is currently $S^1_3 = [0,5] \setminus [1,2]$. It follows that bidding $1$ weakly dominates bidding $x$.
Hence we can set $S^2_1=\{1\}$.

Next consider the iterative process on the RHS of Figure~\ref{tab:1}.
This time let's begin by deleting strategies of Buyer~$2$ that are weakly dominated.
Observe that Buyer~$2$ is willing to pay $4$ to beat Buyer~$1$ and $3$ to beat Buyer~$3$.
Let's imagine that Buyer~$2$ now deletes any bid over $4$ {\bf and} any bid between $1$ and $2$.
Therefore $S^1_2= [0,4] \setminus [1,2]$. At this stage the undeleted strategies for Buyer~$1$ and Buyer~$3$
remain $S^1_1=S^1_3=[0,\infty)$. 

In the next step consider Buyer~$1$. 
Again, bidding less than one or above two is weakly dominated. This time let's compare the outcomes for Buyer~$1$ between 
bidding $2$ and bidding $x \in [1,2)$.
If Buyer~$2$ and Buyer~$3$ are both bidding below $x$ then Buyer~$1$ wins with a bid of $2$ or a bid of $x$.
If either Buyer~$2$ or Buyer~$3$ is bidding greater than $2$ then Buyer~$1$  loses with a bid of $2$ or a bid of $x$.
So suppose the highest bid from Buyer~$2$ and Buyer~$3$ is between $x$ and $2$. If this highest bid is from Buyer~$2$ then 
Buyer~$1$ would prefer to lose and so bidding $x$ is preferable to bidding $2$. 
On the other hand, if this highest bid is from Buyer~$3$ then 
Buyer~$1$ would prefer to win and so bidding $2$ is preferable to bidding $x$. But the former case cannot happen as
the strategy space of Buyer~$2$ is currently $S^1_2 = [0,4] \setminus [1,2]$. If follows that bidding $2$ weakly dominates bidding $x$.
Hence we can set $S^2_3=\{2\}$.

But $\{1\}\cap\{2\} = \emptyset$. Therefore, {\bf no} strategy for Buyer~$1$ survives the iterative deletion of weakly-dominated strategies.
That is, for each bid value $b$ there is a sequence of iterative deletions of weakly-dominated strategies that deletes
the bid $b$ of Buyer~$1$. 

\begin{figure}[h]
	\centering
	\resizebox{13cm}{!}{

        \caption{5 buyers, binomial distribution}
        \label{fig:fivebinomial}
    \end{subfigure}
        \caption{Frequency of Non-Monotonic Price Trajectories in 5-Buyer Auctions}
\label{fig:fivebuyers}
\end{figure}

\end{document}